\newcommand{\Z}{{\mathbb Z}}
\newcommand{\R}{{\mathbb R}}
\newcommand{\C}{{\mathbb C}}
\newcommand{\N}{{\mathbb N}}
\newtheorem{theorem}{Theorem}
\newtheorem{remark}{Remark}[section]
\newtheorem{lemma}[remark]{Lemma}
\newtheorem{proposition}[remark]{Proposition}
\begin{document}

\title[Simplicity of eigenvalues]{Simplicity of eigenvalues in Anderson-type models}

\author{Sergey Naboko}

\address{Department of Mathematical Physics, Institute of Physics, St.\ Petersburg State University, St.\ Petersburg, 198504, Russia}

\email{naboko@math.su.se}

\thanks{S.\ N.\ was supported by Russian research grant RFBR  09-01-00515a.}

\author{Roger Nichols}

\address{Department of Mathematics, University of Alabama at Birmingham, Birmingham, AL~35294, USA}

\email{rnich02@uab.edu}

\author{G\"unter Stolz}

\address{Department of Mathematics, University of Alabama at Birmingham, Birmingham, AL~35294, USA}

\email{stolz@math.uab.edu}

\thanks{G.\ S.\ was supported in part by NSF grant DMS-0653374.}

\begin{abstract}
We show almost sure simplicity of eigenvalues for several models of Anderson-type random Schr\"odinger operators, extending methods introduced by Simon for the discrete Anderson model. These methods work throughout the spectrum and are not restricted to the localization regime. We establish general criteria for the simplicity of eigenvalues which can be interpreted as separately excluding the absence of local and global symmetries, respectively. The criteria are applied to Anderson models with matrix-valued potential as well as with single-site potentials supported on a finite box.

\end{abstract}

\maketitle

\section{Introduction}

\subsection{Models}

Some time back Barry Simon published the short note \cite{Simon} in which he proved almost sure simplicity of eigenvalues of the discrete Anderson model. The latter is the random operator acting on $u\in \ell^2(\Z^d)$ as
\begin{equation} \label{disAnderson}
(h_{\omega} u)(n) = (h_0 u)(n) + \omega_n u(n),
\end{equation}
where $h_0$ is the discrete Laplacian,
\begin{equation} \label{disLaplace}
(h_0u)(n) = \sum_{k\in\Z^d: |k-n| =1} u(k),
\end{equation}
and $\omega = (\omega_n)_{n\in \Z^d}$ are i.i.d.\ real random variables with distribution $\mu$. Here we assume that $\mu$ is absolutely continuous with bounded and compactly supported density $\rho$. While stated somewhat differently in \cite{Simon}, the result proven there can most easily be formulated as
\begin{theorem} \label{thm:simon}
For almost every $\omega$, all eigenvalues of $h_{\omega}$ are simple.
\end{theorem}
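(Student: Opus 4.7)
The plan is to combine a rank-one perturbation analysis at a single site with a Fubini argument that exhausts the lattice site by site. For each finite $F \subset \Z^d$, introduce the event
\begin{equation*}
C_F = \{\omega : \text{some } E \text{ satisfies } \dim \ker(h_\omega - E) \geq 2 \text{ and } u(n) = 0 \text{ for every } u \in \ker(h_\omega - E) \text{ and } n \in F\}.
\end{equation*}
Then $C_\emptyset$ is precisely the event that $h_\omega$ has a multiple eigenvalue, the family $\{C_F\}$ is decreasing in $F$, and $\bigcap_F C_F = \emptyset$ because no nonzero $\ell^2$-vector can vanish on all of $\Z^d$. Showing $P(C_F) = P(C_{F \cup \{n\}})$ for every finite $F$ and every $n \notin F$ will then force $P(C_\emptyset) = 0$ by continuity of the probability measure, which is exactly the theorem.

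The key input is the standard rank-one dichotomy. Fix $n \notin F$ and condition on $\omega^{\perp n} := (\omega_m)_{m \neq n}$; write $h_\omega = A + \omega_n P_n$ where $P_n = |\delta_n\rangle\langle \delta_n|$, so that $A$ is independent of $\omega_n$. Any $h_\omega$-eigenvector $u$ at eigenvalue $E$ either satisfies $u(n) = 0$, in which case $u$ is automatically an $A$-eigenvector at the same $E$ and is insensitive to $\omega_n$, or it satisfies $u(n) \neq 0$, in which case $u$ is uniquely determined up to a scalar by $u \propto (A - E)^{-1}\delta_n$ and $E$ must solve the Borel-transform equation $\langle \delta_n, (A - E)^{-1}\delta_n\rangle = -1/\omega_n$. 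In particular, at most one linearly independent ``moving'' eigenvector can occur at any given $E$.

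To prove $P(C_F \setminus C_{F \cup \{n\}}) = 0$, suppose $\omega$ lies in the difference: $\dim \ker(h_\omega - E) \geq 2$, every eigenvector vanishes on $F$, but some eigenvector fails to vanish at $n$. The dichotomy identifies that eigenvector as the unique moving one, so there must also be a linearly independent static eigenvector $v$ with $v(n) = 0$; since $\omega \in C_F$, one also has $v|_F = 0$, i.e., $v$ is a nonzero $A$-eigenvector at $E$ lying in $\bigcap_{m \in F \cup \{n\}} \ker P_m$. Hence $E$ belongs to a countable set determined by $\omega^{\perp n}$ alone, and for each such $E$ the Borel-transform equation pins $\omega_n$ down to a single value. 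The $\omega_n$-section of $C_F \setminus C_{F \cup \{n\}}$ is therefore countable for a.e.\ $\omega^{\perp n}$; Fubini (using absolute continuity of $\mu$ with respect to Lebesgue measure, ensured by the boundedness of $\rho$) then gives $P(C_F \setminus C_{F \cup \{n\}}) = 0$.

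The main obstacle I anticipate is the clean treatment of the rank-one dichotomy when the $E$-eigenspace of $h_\omega$ is multiple; the ``at most one moving eigenvector per $E$'' assertion rests on the simplicity of the spectrum of $A$ restricted to the cyclic subspace generated by $\delta_n$, which should be set up as a preliminary lemma. With that in hand, iterating the identity $P(C_F) = P(C_{F \cup \{n\}})$ along an exhaustion $F_0 = \emptyset \subset F_1 \subset \cdots \nearrow \Z^d$ and using downward continuity of measure gives $P(C_\emptyset) = \lim_k P(C_{F_k}) = P(\bigcap_k C_{F_k}) = 0$, completing the argument.
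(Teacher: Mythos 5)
Your per-site mechanism (the rank-one dichotomy at a site $n$, plus the fact that for fixed $A=h_\omega-\omega_nP_n$ and fixed $E$ at most one value of $\omega_n$ admits an eigenvector at $E$ with nonzero value at $n$, followed by Fubini in $\omega_n$) is sound, but the global bookkeeping has a genuine gap: the final step $P(C_\emptyset)=\lim_k P(C_{F_k})=P(\bigcap_k C_{F_k})=0$ does not follow, because $\bigcap_k C_{F_k}=\emptyset$ is unjustified. The energy $E$ in the definition of $C_F$ is existentially quantified, so it may depend on $F$: an $\omega$ lying in every $C_{F_k}$ only provides, for each $k$, \emph{some} degenerate eigenvalue $E_k$ all of whose eigenvectors vanish on $F_k$, and nothing forces the $E_k$ to coincide. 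Hence you never produce a single nonzero $\ell^2$-vector vanishing on all of $\Z^d$; a priori an $\omega$ could have infinitely many degenerate eigenvalues whose eigenspaces live farther and farther from the origin, and such an $\omega$ would belong to every $C_{F_k}$. Downward continuity therefore only yields $P(C_\emptyset)=P(\bigcap_kC_{F_k})$, which is not known to vanish. (Two smaller points: when $E\in\sigma_p(A)$, which is automatic in your situation because of the static eigenvector, ``$u\propto(A-E)^{-1}\delta_n$'' must be replaced by the Aronszajn--Donoghue analysis on the cyclic subspace generated by $\delta_n$, including the case where $E$ is an atom of the spectral measure of $\delta_n$, where the only admissible coupling is $\omega_n=0$; and measurability of the events needs at least a word before Fubini is invoked.)

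The good news is that your own per-site lemma repairs the argument without any limiting procedure: for each single site $n$ let $B_n$ be the event that some eigenvalue $E$ of $h_\omega$ possesses two nonzero eigenvectors $u,v$ with $v(n)=0\neq u(n)$. Conditioning on $\omega^{\perp n}$, the vector $v$ is an $A$-eigenvector, so $E$ lies in the countable set $\sigma_p(A)$, and for each such $E$ the existence of $u$ pins $\omega_n$ to at most one value; thus the $\omega_n$-section of $B_n$ is countable and $P(B_n)=0$. Off the null set $\bigcup_nB_n$, no eigenvalue can be degenerate: given two independent eigenvectors at $E$, for every $n$ some nontrivial combination vanishes at $n$, hence all eigenvectors at $E$ vanish at $n$, for every $n$, forcing them to vanish identically. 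This repaired route is genuinely different from the paper's: the paper does not reprove Theorem~\ref{thm:simon} directly (it cites \cite{Simon}), and its own machinery obtains it as the rank-one special case of Theorem~\ref{thm:criterion} and Theorem~\ref{thm:simon2}, i.e.\ via spectral averaging, trivial simplicity of the rank-one Birman--Schwinger boundary values, and weak cyclicity of $\delta_0$ proved through nonvanishing boundary values of Green's functions of bounded characteristic; your approach trades that analytic-function input for elementary rank-one perturbation theory and Fubini, at the price of being tied to the rank-one structure.
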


What makes this result particularly appealing is that it is known that the Anderson model has intervals of dense pure point spectrum. In fact, for sufficiently large disorder (in the sense that $\|\rho\|_{\infty}$ is sufficiently small) it is known that the entire spectrum of $h_{\omega}$ is almost surely pure point, e.g.\ \cite{Carmona/Lacroix}. Theorem~\ref{thm:simon} says that on intervals of pure point spectrum the spectral multiplicity of $h_{\omega}$ is one.

One of the reasons for being interested in results like Theorem~\ref{thm:simon} is that they can be useful tools in proofs of other properties of random operators, see e.g.\ the proof of dynamical localization in \cite{dRJLS}. However, our interest in Simon's result and the technique used to prove it comes mainly from the fact that it makes rigorous sense out of the following physical heuristics:

Degeneracies of eigenvalues, with the exception of accidental ones, are caused by symmetry. Randomness breaks all symmetry and accidental degeneracies should have probability zero. Thus an operator which is truly random should have simple eigenvalues with probability one.

One of the difficulties in making such heuristics rigorous lies in the fact that the connection between symmetry and eigenvalue degeneracy is usually understood via analytic perturbation theory: Analytic eigenvalue branches will either show permanent degeneracies (reflecting a symmetry not broken by the perturbation) or have level crossings only for discrete sets of the perturbation parameter. However, analytic perturbation theory does not apply to dense lying eigenvalues!

Another problem comes with the relative vagueness of the claim that randomness breaks symmetry. Do our favorite models of random operators come with the ``true randomness'' which rules out all symmetries, even potentially well-hidden ones?

It is mostly for these reasons that we have decided to give Simon's result a second, closer, look. We do this by considering three different models, where attempting to extend Simon's result causes an increasing amount of difficulty and technical complexity, while all of them fall under the same physical heuristics.

The first model contains the discrete Anderson model (\ref{disAnderson}) as a special case and will serve as a simple test case for the methods to be developed.

\subsection*{Model A: Anderson model with matrix-valued potential}

Fix $k\in \N$ and a real-valued, symmetric and positive definite $k\times k$-matrix $W$. Consider the random operator $H_{\omega}^A$ acting on $\phi \in \ell^2(\Z^d;\C^k) \cong (\ell^2(\Z^d))^k$ as
\begin{equation} \label{matAnderson}
(H_{\omega}^A \phi)(n) = (h_0 \phi)(n) + \omega_n W\phi(n), \quad n\in \Z^d.
\end{equation}
Here $(\omega_n)$ and $h_0$ are as above (more precisely, $h_0$ acts on each component of $\phi$ by (\ref{disLaplace})).

Without loss of generality we may assume that $W = \mbox{diag}(\lambda_1,\ldots,\lambda_k)$, where the eigenvalues $\lambda_j$ of $W$ are all strictly positive (apply the diagonalizing transformation of $W$ to each $n$ in (\ref{matAnderson}), leading to a unitarily equivalent operator). As a result,
\[ H_{\omega}^A \cong \bigoplus_{j=1}^k h_{\omega}^{(j)}, \]
where $(h_{\omega}^{(j)}u)(n) = (h_0u)(n) + \lambda_j \omega_n u(n)$ for $u\in \ell^2(\Z^d)$. Thus each $h_{\omega}^{(j)}$ is of the form (\ref{disAnderson}) with the additional parameter $\lambda_j$ scaling the random potential. Note, however, that the random operators $h_{\omega}^{(j)}$, $j=1,\ldots,k$, are correlated and that simplicity of the eigenvalues of $H_{\omega}^A$ is not an immediate consequence of Theorem~\ref{thm:simon}. In fact, if $W$ has degenerate eigenvalues, then the point spectrum of $H_{\omega}^A$ will have degeneracies of at least the same multiplicity with probability one. Thus we will need to require simplicity of $W$.

A more complex generalization of the Anderson model (\ref{disAnderson}) is given by

\subsection*{Model B: Anderson model with finitely supported single-site potential}

Choose $L=(L_1,L_2,\ldots,L_d)\in \mathbb{N}^d$
and consider the rectangular box
\[
C_0:=\{0,\ldots,L_1-1\}\times \cdots \times \{0,\ldots,L_d-1\}.
\]
Pick a {\it single-site potential} $f:C_0 \rightarrow (0,\infty)$ and for $n=(n_1,n_2\ldots,n_d) \in \Z^d$ let $nL:=(n_1L_1,n_2L_2,\ldots,n_dL_d)$.  Model B is the family of
self-adjoint operators $H_{\omega}^B$ on
$\ell^2(\Z^d)$ given by
\begin{equation} \label{modelB}
  H_{\omega}^B=h_0+\sum_{n\in \Z^d}\omega_n f(\cdot-nL).
\end{equation}

Models A and B have in common that they give generalizations of the discrete Anderson model where the single-site potential is an operator of finite rank greater than one, providing internal structure to the single-site terms. On the heuristic level of symmetry considerations one is lead to distinguish between global and local symmetries. The global symmetries are the ones which are broken by the randomness of the potential. But degeneracies within the single-site terms $W$ and $f$, respectively, give rise to additional symmetries, whose influence on the multiplicity of eigenvalues in the Anderson model is not clear, not even heuristically. From this point of view, particularly interesting special cases of Model B are those were $f= \chi_{C_0}$, the characteristic function of the box $C_0$. Here the single-site contributions to the random potential have maximal degeneracy and it must depend on the specifics of the interaction of potential and kinetic energy if these degeneracies can be broken up by the randomness.

A particular reason for introducing Model B is that it can be seen as a hybrid which shares some properties with the discrete Anderson model (\ref{disAnderson}) but has other features in common with our last model, the continuum Anderson model.

\subsection*{Model C: Continuum Anderson model}

This is the random operator in $L^2(\R^d)$ given by
\begin{equation} \label{contAnderson}
H_{\omega}^C = -\Delta + \sum_{n\in \Z^d} \omega_n f(x-n).
\end{equation}
Here $\Delta$ is the continuum Laplacian and the random parameters $(\omega_n)$ are as before. The single site potential is now a multiplication operator by a non-negative bounded function $f$, supported on $[0,1]^d$.

\subsection{Results}

While we hope to return to the continuum Anderson model in the future, we do not have any final results on the simplicity of its point spectrum to present here. Our concrete results on simplicity of the point spectrum will be restricted to Models A and B. However, in Section~\ref{sec:abstract} below we will start by presenting Theorem~\ref{thm:criterion}, a general criterion for simplicity of eigenvalues in terms of simplicity of corresponding Birman-Schwinger operators, which applies to all the models considered here. The criterion will yield two conditions which need to be verified in concrete examples to conclude simplicity. Physically, these conditions can be interpreted as absence of local and global symmetries, respectively. As discussed in detail at the end of Section~\ref{sec:abstract}, it is illuminating to see how the goal of verifying these two conditions brings out the mathematical differences between Models A, B and C. For Model A both conditions are relatively easy to verify, which makes it a nice test case. In Model B each condition yields additional challenges and, at least for one of the conditions, our answer will require additional information on the structure of $C_0$ and/or $f$. Finally, for the continuum Anderson model one condition is obviously true (it follows from unique continuation which is not available for the discrete models) while the other condition is very hard to check with any degree of generality (and we will not try here).

After the general results in Section~\ref{sec:abstract}, the rest of the paper is devoted to cases where the single-site potential is a finite rank operator, i.e.\ in particular to Models A and B. In Section~\ref{sec:simon} we present Theorem~\ref{thm:simon2}, a result which extends a rank-one argument provided in \cite{Simon} to a finite rank setting suitable for our applications.

This will be used in Section~\ref{sec:modelA} to prove simplicity of the point spectrum for Model A:

\begin{theorem} \label{thm:modelA}
Suppose that the positive definite matrix $W$ in (\ref{matAnderson}) has simple eigenvalues. Then $H_{\omega}^A$ has simple point spectrum for almost every $\omega$.
\end{theorem}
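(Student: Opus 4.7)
The plan is to leverage the direct sum decomposition $H_{\omega}^A \cong \bigoplus_{j=1}^k h_{\omega}^{(j)}$ with $h_{\omega}^{(j)} = h_0 + \lambda_j V_{\omega}$ already noted after (\ref{matAnderson}), where $V_{\omega}$ is the diagonal multiplication operator with entries $(\omega_n)$. Since simplicity of the point spectrum of a direct sum of self-adjoint operators is equivalent to (i) simplicity of the point spectrum of each summand and (ii) pairwise disjointness of the point spectra of distinct summands, I would first dispatch (i) by Simon's Theorem~\ref{thm:simon}: the rescaled random potential $(\lambda_j \omega_n)_n$ has absolutely continuous, compactly supported density for every $j$, so each $h_{\omega}^{(j)}$ almost surely has simple eigenvalues, and a union bound over the finitely many $j\in\{1,\ldots,k\}$ completes this part. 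The hypothesis that $W$ has simple (and here strictly positive) eigenvalues is used to ensure that the $\lambda_j$ are genuinely distinct.

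For (ii), fix a pair $j\neq j'$ and condition on $(\omega_n)_{n\neq 0}$. Setting $B^{(l)}:= h_0 + \lambda_l \sum_{n\neq 0} \omega_n P_n$, with $P_n$ the projection onto $\delta_n$, each block is a rank-one perturbation in the remaining coordinate, $h_{\omega}^{(l)} = B^{(l)} + \lambda_l \omega_0 P_0$ for $l\in\{j,j'\}$, but now with \emph{distinct} coupling constants $\lambda_j\neq\lambda_{j'}$. This is precisely the finite-rank situation for which Theorem~\ref{thm:simon2} is designed, and I would apply it to this pair. Concretely, any common eigenvalue $E$ whose associated eigenvector vanishes at site $0$ in at least one block must lie in the countable set $\sigma_p(B^{(j)})\cup\sigma_p(B^{(j')})$, and for each such $E$ the standard rank-one formula confines $\omega_0$ to a countable set. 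Otherwise neither eigenvector vanishes at $0$, and the Krein / Simon-Wolff identity forces
\begin{equation*}
  \omega_0 \,=\, -\frac{1}{\lambda_j G^{(j)}(E)} \,=\, -\frac{1}{\lambda_{j'} G^{(j')}(E)},
\end{equation*}
with $G^{(l)}(E) := \langle\delta_0,(B^{(l)}-E)^{-1}\delta_0\rangle$; the second equality restricts $E$ to the zero set of a difference of two Herglotz functions, which is countable unless the two functions agree, while the first equality then pins $\omega_0$ to a set of $\mu$-measure zero by absolute continuity of $\rho$.

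The hard part is ruling out the accidental coincidence $\lambda_j G^{(j)} \equiv \lambda_{j'} G^{(j')}$, which is exactly the ``global symmetry'' obstruction of Theorem~\ref{thm:criterion} in disguise. The rescue is that $B^{(j)}$ and $B^{(j')}$ differ by the scalar factor $\lambda_j/\lambda_{j'}\neq 1$ in their potential strengths, so for generic $\omega_{\neq 0}$ their spectra are disjoint and the two Herglotz functions have entirely different pole structures, making the identity impossible; the cleanest way to promote this heuristic to an almost-sure statement is to freeze all but one further coordinate $\omega_{n_1}$, $n_1\neq 0$, and use analyticity of the pole locations in $\omega_{n_1}$ with the two distinct rates $\lambda_j, \lambda_{j'}$. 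This is the conceptual content packaged in Theorem~\ref{thm:simon2}, which I would invoke as a black box. Fubini then converts each null set of $\omega_0$ to a null set of the full product measure, and a union over the finitely many pairs $(j,j')$ concludes.
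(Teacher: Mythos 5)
Your block-diagonal strategy (each $h_{\omega}^{(j)}$ simple by Theorem~\ref{thm:simon}, plus almost sure pairwise disjointness of the blocks' point spectra) is a legitimate alternative to the paper's route, and part (i) is fine; but the disjointness argument as written has genuine gaps. First, the case in which the eigenvectors of \emph{both} blocks vanish at site $0$ is not covered: there $E\in\sigma_p(B^{(j)})\cap\sigma_p(B^{(j')})$ and the rank-one formulas impose no constraint on $\omega_0$ whatsoever, so the claim that ``the standard rank-one formula confines $\omega_0$ to a countable set'' fails precisely in the dangerous case (only the mixed case, one eigenvector vanishing at $0$, is handled by countably many $E\in\sigma_p(B^{(j)})$ each pinning $\omega_0$ to one value). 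To excise it you need, for each block, the almost sure \emph{cyclicity} of $\delta_0$ -- the actual content of \cite{Simon}, strictly stronger than the simplicity statement you quote -- since an eigenvector lying in the cyclic subspace of $\delta_0$ cannot vanish at $0$. Second, ``the zero set of a difference of two Herglotz functions is countable'' is false for boundary values on $\R$: by Lemma~\ref{lem:bc}(c) that set is only Lebesgue-null unless the difference vanishes identically, and a Lebesgue-null set of energies is \emph{not} mapped to a $\mu$-null set of couplings by $E\mapsto -1/(\lambda_j G^{(j)}(E))$, so your final ``pins $\omega_0$ to a set of $\mu$-measure zero'' does not follow. The correct tool is spectral averaging, Lemma~\ref{specaverage}: for a.e.\ $\omega_0$ the operator $h_{\omega}^{(j)}$ restricted to the cyclic subspace of $\delta_0$ (which, by cyclicity, is everything) has no eigenvalues in the exceptional null set.

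Third, the step you single out as hard -- excluding $\lambda_j G^{(j)}\equiv\lambda_{j'}G^{(j')}$ -- cannot be delegated to Theorem~\ref{thm:simon2}: that theorem concerns one family $H_\lambda=H+\lambda V$ and concludes $P^{pp}(H_{\lambda})X\subset P^{pp}(H_{\lambda})Y$; it says nothing about the coincidence of two scalar Herglotz functions attached to two different background operators, so invoking it ``as a black box'' here is a misapplication. Fortunately this step is in fact the easy one: since $B^{(l)}\delta_0=h_0\delta_0$ does not depend on $l$, one has $\lambda_l G^{(l)}(z)=-\lambda_l/z+O(|z|^{-3})$ as $|z|\to\infty$, so for every fixed $\hat{\omega}$ the two functions differ because $\lambda_j\neq\lambda_{j'}$ -- a Neumann-series observation of exactly the kind the paper uses. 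With these repairs (per-block cyclicity, null-set reasoning combined with Lemma~\ref{specaverage}, and the asymptotic separation of the two Weyl functions) your decomposition argument does go through; it differs from the paper, which instead verifies (\ref{eq:BSsimple}) via Theorem~\ref{thm:criterion} (simplicity of $W$ entering through $V_0^{1/2}(H-z)^{-1}V_0^{1/2}=-\frac{1}{z}(V_0+O(1/|z|))$) and (\ref{eq:weakcyclic}) via Theorem~\ref{thm:simon2} applied to the $k$-dimensional site subspaces, with invertibility of $P_j(H-z)^{-1}P_0$ obtained by counting shortest paths. As submitted, however, the doubly-vanishing case and the countability/measure-transfer step are genuine gaps.
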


As noted above the simplicity of $W$ is necessary here.

More effort will go in the subsequent investigation of Model B, where our main result will be

\begin{theorem} \label{thm:modelB}
Suppose that for Model B one of the following additional assumptions holds:

(i) $d$ and $C_0$ are arbitrary and $f:C_0 \to (0,\infty)$ simple, i.e.\ $f(j) \not= f(k)$ for arbitrary $j, k\in C_0$ with $j\not= k$, or

(ii) $d$ arbitrary, $C_0 = \{0,\ldots,L_1-1\} \times \{0\} \times \ldots \times \{0\}$, $L_1$ any positive integer, $f= \chi_{C_0}$, the characteristic function of $C_0$, or

(iii) $d=2$, $C_0 = \{0,1\} \times \{0,1\}$, $f=\chi_{C_0}$.

Then the point spectrum of $H_{\omega}^B$ is almost surely simple.
\end{theorem}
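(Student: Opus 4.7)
The plan is to reduce Theorem~\ref{thm:modelB} to the abstract criterion of Theorem~\ref{thm:criterion}, whose two hypotheses can be interpreted respectively as the absence of local and of global symmetries. In Model~B the single-site perturbation $f(\cdot-nL)$ is a finite rank operator on $\ell^2(\Z^d)$ with range contained in $\ell^2(nL+C_0)$, so the finite-rank extension of Simon's argument given by Theorem~\ref{thm:simon2} verifies the global condition uniformly across all three cases (i), (ii) and (iii). The remaining work is to verify the local condition, and this is where the specific structure of $C_0$ and $f$ enters and forces the division into three cases.

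The local condition should reduce to simplicity of eigenvalues of a finite-dimensional operator $M(E)$ on $\ell^2(C_0)$, assembled from the restriction to $C_0$ of the free resolvent $(h_0-E)^{-1}$ with $\sqrt{f}$ factors on both sides, for all but a discrete exceptional set of energies $E$. Under assumption (i), $f$ takes distinct positive values on the sites of $C_0$, so the diagonal part of $M(E)$ already has simple spectrum; an analytic perturbation argument in the regime of large $|E|$, where the off-diagonal Green's function entries of $M(E)$ are small, propagates simplicity to an open set of energies, and then analyticity in $E$ extends it to all $E$ outside a discrete set. This parallels the role of simplicity of $W$ in the proof of Theorem~\ref{thm:modelA}.

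Under assumption (ii), the box $C_0$ is a one-dimensional segment along the first coordinate axis and $f=\chi_{C_0}$, so $M(E)$ becomes a submatrix of the lattice Green's function along that axis. The approach is to exploit that $M(E)$ is persymmetric under reflection of $C_0$ about its midpoint, decomposing its spectrum into reflection-symmetric and antisymmetric sectors, and then to show that the corresponding two characteristic polynomials have no common root as functions of $E$. This reduces to an explicit computation using the transfer-matrix description of the one-dimensional lattice Green's function.

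Assumption (iii), the $2\times 2$ square $C_0=\{0,1\}\times\{0,1\}$ in $\Z^2$ with $f=\chi_{C_0}$, is the main obstacle. Here $C_0$ carries the full dihedral symmetry group $D_4$ of the square, $M(E)$ commutes with the corresponding representation on $\ell^2(C_0)$, and the spectrum of $M(E)$ therefore decomposes along the irreducible representations of $D_4$. The risk is that eigenvalues from distinct symmetry sectors coincide accidentally as functions of $E$. To rule this out, one computes $M(E)$ in the symmetry-adapted basis explicitly in terms of the two-dimensional lattice Green's function values at the three distinct displacements within $C_0$, and then verifies that no two of the resulting eigenvalue expressions agree identically in $E$. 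This nontrivial identity check for the two-dimensional Green's function is the step where the specifically two-dimensional geometry is used in an essential way, and where the proof departs qualitatively from cases (i) and (ii).
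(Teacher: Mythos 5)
Your plan for case (iii) contains a gap that cannot be repaired within the framework you propose. The operator you want to analyze, $\chi_{C_0}(h_0-E)^{-1}\chi_{C_0}$ (and likewise the version with the coupling $\omega_0\chi_{C_0}$ added, since that perturbation is a multiple of $\chi_{C_0}$), commutes with the full point symmetry group $D_4$ of the square centered at the center of $C_0$, because $h_0$ and $\chi_{C_0}$ both do. The permutation representation of $D_4$ on the four sites of $C_0$ contains the two-dimensional irreducible representation, spanned (in the notation of the paper) by $\delta_3$ and $\delta_4$, so every such symmetric operator has a two-fold degenerate eigenvalue on that subspace \emph{for every} $E$ or $z$. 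Hence your proposed identity check --- ``verify that no two of the resulting eigenvalue expressions agree identically in $E$'' --- must fail: two of them agree identically by symmetry, not accidentally. The only way out is to break the $D_4$ symmetry, and since the single-site term at $C_0$ cannot do this (it is constant on $C_0$), one is forced to use the random environment, i.e.\ the variables $\omega_n$ with $n\neq 0$. This is exactly what the paper does: it replaces $h_0$ by $h_{a,b}$ with unequal couplings $a\neq b$ on the tiles above and to the left of $C_0$, computes the Neumann expansion to fourth order in the basis $\delta_1,\ldots,\delta_4$, and shows via Schur complementation and Rouch\'e that the doubly degenerate eigenvalue splits by an amount of order $(a-b)/z^3$; the result is then transferred to the full random operator by a Combes--Thomas estimate plus analyticity of the Sylvester (discriminant) determinant in the finitely many remaining variables. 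This mechanism is also why the conclusion in case (iii) is only almost sure rather than deterministic, a distinction your proposal does not reflect.

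Two further points are treated too lightly. First, the ``global'' condition is not verified by Theorem~\ref{thm:simon2} automatically: its hypothesis (\ref{eq:spanX}) is nontrivial for Model B (a single coupling constant does not suffice, unlike for Model A), and the paper needs Lemma~\ref{lem:2tilespan} (a layer-stripping induction over the tile using the span over many values of $\mu$), a large-coupling Schur-complement limit that decouples two neighboring tiles from the environment, and analyticity in the environment variables; this is the content of all of Section~\ref{sec:weakcyclicModB}. Second, in all three cases the Birman--Schwinger operator that must be simple is $\sqrt{f}(H_{\hat{\omega}}-z)^{-1}\sqrt{f}$, not the free quantity built from $(h_0-E)^{-1}$; the paper bridges this via large-$|z|$ Neumann asymptotics (where the environment enters only at higher order, for (i) and (ii)) together with the bounded-characteristic argument of Lemma~\ref{lem:green}(c), which converts simplicity at one complex $z$ into simplicity of the boundary values for a.e.\ real $E$. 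Your appeal to ``analyticity in $E$'' at real energies does not serve this purpose, since the resolvent is not analytic across the spectrum. Your alternative routes for (i) and (ii) (diagonal dominance, respectively persymmetry and transfer matrices) are plausible in outline, but they would still need to be fed through such a mechanism to reach the random operator.
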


Theorem~\ref{thm:modelB} will be proven in the last two sections, with Section~\ref{sec:weakcyclicModB} establishing the condition for absence of global symmetries and Section~\ref{sec:BSsimpleModB} showing the absence of local symmetries. Here the additional assumptions (i), (ii) or (iii) required in Theorem~\ref{thm:modelB} reflect different mechanisms which can be used to break local symmetries. In case (i) local symmetries are broken by the potential energy term alone. For (ii), where the single-site potentials are essentially one-dimensional, we will be able to use that one-dimensional Jacobi matrices have simple eigenvalues. Condition (iii) is the hardest but also most interesting case. Here we will have to use properties of the random environment (i.e.\ the effect of random variables other than $\omega_0$) to break the symmetries. While we can not prove simplicity of the point spectrum for Model B in full generality, the study of case (iii) provides prototypes of some techniques which would have to be pushed further (and understood in a way which uses less brute force) for a general result.

\subsection{Context}

An interesting alternative approach to simplicity of eigenvalues in the Anderson model, using methods very different from those employed here, has been found by Klein and Molchanov \cite{KM}. Their methods work in the localization regime, i.e.\ in energy regions where the spectrum is known to be pure point. They exploit known decay properties of Green's function in these regions together with the Minami estimate. The latter can be interpreted as showing the stochastic independence of near lying eigenvalues and served as the central tool in the proof of Poisson level statistics of the eigenvalues of finite volume restrictions of the discrete Anderson model in \cite{Minami}.

The proof of the Minami estimate and Poisson statistics has recently been extended to the continuum Anderson model by Combes, Germinet and Klein \cite{CGK}, where it holds in the localized region near the bottom of the spectrum. Their work also extends the result of \cite{KM} to the continuum Anderson model, showing almost sure simplicity of eigenvalues in the energy regime covered by \cite{CGK}.

Much of our motivation for the current investigation came from these works. While we can not treat the continuum Anderson model at this point, our methods establish simplicity of eigenvalues throughout the spectrum and are not restricted to the localized regime. One of our hopes is that we can use them in the future to show that a Minami estimate holds throughout the spectrum for general classes of Anderson-type models, ultimately including the continuum Anderson model, as shown for the discrete Anderson model in \cite{Minami}. We also refer to \cite{GV} and \cite{BHS} for other proofs of the Minami estimate as well as extensions to $n$-level Minami estimates. Far reaching extensions of the results in \cite{Minami} and \cite{CGK} were recently announced by Germinet and Klopp as work in preparation.

Finally, we mention work by Jaksic and Last \cite{JL} which extends Simon's result Theorem~\ref{thm:simon} to showing that the singular spectrum (the unions of point and singular continuous spectrum) of the discrete Anderson model is almost surely simple. We guess that this will also hold in more general situations like those considered here, but have not proven this. Jaksic and Last mention that results of this form allow for an intriguing way of viewing the extended states conjecture (or at least a weak version of it): If one could identify regimes with spectral regions of multiplicity larger than one in the Anderson model, then this would necessarily imply the existence of continuous spectrum (or, using their result, absolutely continuous spectrum).

\section{Simplicity through Birman-Schwinger operators} \label{sec:abstract}

The set
of eigenvalues of a selfadjoint operator $A$ on a separable Hilbert space $\mathcal H$ will be denoted by
$\sigma_p(H)$. For Borel sets $B\subset \R$ we denote by $\chi_B(A)$
the spectral projection onto $B$ for $A$. The closed linear span of all eigenfunctions of $A$ will be denoted by ${\mathcal H}^{pp}(A)$, the pure point subspace for $A$, and $P^{pp}(A)$ is the orthogonal projection onto ${\mathcal H}^{pp}(A)$.

If $M\subset {\mathcal H}$, then ${\mathcal H}(A,M)$ denotes the smallest reducing subspace for $A$ containing $M$. Note the characterization
\begin{equation} \label{eq:redsub}
 {\mathcal H}(A,M) = \overline{\mbox{span}\{ (A-z)^{-1}f: \,z\in \C\setminus \R, \, f\in M\}}.
\end{equation}

We will denote Lebesgue measure on $\R$ by $|\cdot|$. We use $N(\cdot)$ to denote null spaces and $R(\cdot)$ to denote ranges.

Let $H_0$ be a selfadjoint operator and $V$ a non-negative and
bounded operator in $\mathcal H$. Consider
the family of selfadjoint operators
\begin{equation} \label{eq:oneparmodel}
 H_{\lambda} = H_0 +\lambda V
 \end{equation}
for $\lambda \in \R$. We write ${\mathcal H}_V := {\mathcal H}(H_0,R(V))$ and note that it is easily seen from the resolvent identity and (\ref{eq:redsub}) that
\begin{equation} \label{eq:lambdaind}
{\mathcal H}_V = {\mathcal H}(H_{\lambda},R(V)) \quad \mbox{for all $\lambda \in \R$}.
\end{equation}

We will use the following consequence of spectral averaging:

\begin{lemma} \label{specaverage}
If $M\subset \R$ is such that $|M|=0$, then, for Lebesgue almost every $\lambda$, $\chi_M(H_{\lambda})|_{{\mathcal H}_V} =0$. In particular, for a.e.\ $\lambda$, $H_{\lambda}|_{{\mathcal H}_V}$ has no eigenvalues in $M$.
\end{lemma}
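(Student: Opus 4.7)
The plan is to combine a standard spectral averaging estimate with the observation that the kernel of a spectral projection of $H_\lambda$ is itself a reducing subspace for $H_\lambda$. The input I would use is the Kotani--Simon bound: for $H_\lambda=H_0+\lambda V$ with $V\ge 0$ bounded, every bounded Borel set $B\subset\R$, and every $\phi\in\mathcal H$,
\[
\int_\R \langle V^{1/2}\phi,\chi_B(H_\lambda)V^{1/2}\phi\rangle\,d\lambda \;\le\; |B|\,\|\phi\|^2.
\]
Applying this with $B=M$ and $|M|=0$ forces the non-negative integrand to vanish for a.e.\ $\lambda$; since $\chi_M(H_\lambda)$ is an orthogonal projection, this rewrites as $\|\chi_M(H_\lambda)V^{1/2}\phi\|^2=0$ for $\lambda$ outside some $\phi$-dependent null set $N_\phi$.

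To remove the $\phi$-dependence I would fix a countable dense subset $\{\phi_k\}\subset\mathcal H$ and put $N:=\bigcup_k N_{\phi_k}$, still Lebesgue-null. For each $\lambda\notin N$ the bounded operator $\chi_M(H_\lambda)V^{1/2}$ vanishes on the dense set $\{\phi_k\}$, hence identically; therefore
\[
R(V)\;\subset\;R(V^{1/2})\;\subset\;\ker\chi_M(H_\lambda).
\]

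The final step is to pass from $R(V)$ to the reducing subspace $\mathcal H_V$. Observe that $\ker\chi_M(H_\lambda)=R(\chi_{\R\setminus M}(H_\lambda))$, being the range of a spectral projection of $H_\lambda$, is itself a reducing subspace for $H_\lambda$. By the very definition of $\mathcal H_V$, combined with (\ref{eq:lambdaind}), $\mathcal H_V$ is the \emph{smallest} reducing subspace for $H_\lambda$ that contains $R(V)$; hence $\mathcal H_V\subset\ker\chi_M(H_\lambda)$, i.e.\ $\chi_M(H_\lambda)|_{\mathcal H_V}=0$ for every $\lambda\notin N$. The ``in particular'' statement is then immediate: any eigenvector $\psi\in\mathcal H_V$ with eigenvalue in $M$ would satisfy $\chi_M(H_\lambda)\psi=\psi\neq 0$, contradicting what we have just proved.

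The only mildly delicate point I anticipate is the passage from the pointwise-in-$\phi$ spectral averaging bound to a single null exceptional set of $\lambda$'s that works uniformly on $R(V)$, which is handled by the separability/density argument above. Beyond this, no serious obstacle is expected, since the argument rests only on standard spectral averaging together with the structural characterization of $\mathcal H_V$ as a smallest reducing subspace.
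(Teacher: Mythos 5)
Your argument is correct and follows essentially the same route as the paper: spectral averaging plus separability to obtain a single Lebesgue-null set of $\lambda$'s outside of which $\chi_M(H_\lambda)$ annihilates $R(V)$, and then the reducing-subspace structure of ${\mathcal H}_V$ (via (\ref{eq:lambdaind})) to conclude $\chi_M(H_\lambda)|_{{\mathcal H}_V}=0$. The only cosmetic difference is that the paper extends from $R(V)$ to ${\mathcal H}_V$ through the resolvent characterization (\ref{eq:redsub}), whereas you invoke the minimality of ${\mathcal H}_V$ together with the observation that $\ker\chi_M(H_\lambda)$ reduces $H_\lambda$ -- the same idea in different words.
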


\begin{proof}
By spectral averaging, see Corollary~4.2 in \cite{Combes/Hislop} and its proof, it holds for arbitrary $\varphi \in {\mathcal H}$ and arbitrary Borel sets $B$ that
\[
\int_{\R} \frac{\langle \chi_B(H_{\lambda}) \sqrt{V} \varphi, \sqrt{V} \varphi \rangle}{1+\lambda^2} \,d\lambda \le |B| \|\varphi\|^2.
\]
Thus, if $|M|=0$ and $\varphi$ is fixed, then
\[
\langle \chi_B(H_{\lambda}) \sqrt{V} \varphi, \sqrt{V} \varphi \rangle = 0 \quad \mbox{for a.e.\ $\lambda$}.
\]
As ${\mathcal H}$ is separable, this implies
\[
\sqrt{V} \chi_B(H_{\lambda}) \sqrt{V} = 0 \quad \mbox{for a.e.\ $\lambda$}.
\]
For each such $\lambda$ it follows that $\chi_B(H_{\lambda})|_{{\mathcal H}_V} =0$: Observe first that for $f=V\phi \in R(V)$ and $\psi := \sqrt{V}\phi$,
\[
\|\chi_B(H_{\lambda})f\|^2 = \langle \sqrt{V} \chi_B(H_{\lambda}) \sqrt{V} \psi,\psi \rangle = 0,
\]
i.e.\ $\chi_B(H_{\lambda})f=0$. That $\chi_B(H_{\lambda})f =0$ for all $f\in {\mathcal H}_V$ follows easily from this and (\ref{eq:redsub}) (with $H_{\lambda}$ in place of $H_0$).
\end{proof}

Below we will consider the Birman-Schwinger operators
\[
G(z):= \sqrt{V} (H_0-z)^{-1} \sqrt{V}
\]
for $z\in \C\setminus \R$, as well as their operator-norm boundary values
\begin{equation} \label{eq:BSboundval}
G(E+i0) := \lim_{\varepsilon\downarrow 0} G(E+i\varepsilon),
\end{equation}
for $E\in \R$ where this boundary value exists.

The following abstract criterion will be the basis of all our further investigations.

\begin{theorem} \label{thm:criterion}
Assume that $G(z)$ is compact for all $z\in \C\setminus \R$ and that its boundary value $G(E+i0)$ exists for Lebesgue-almost every $E\in \R$ and has simple non-zero eigenvalues. Then
$H_{\lambda}|_{{\mathcal H}_V}$ has simple point spectrum for almost
every $\lambda \in\R$.
\end{theorem}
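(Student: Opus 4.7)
The plan is to establish a Birman--Schwinger correspondence between eigenvectors of $H_\lambda|_{{\mathcal H}_V}$ at an energy $E$ and eigenvectors of the compact operator $G(E+i0)$ at the eigenvalue $-1/\lambda$: if $\psi\in{\mathcal H}_V$ satisfies $H_\lambda\psi=E\psi$ and $G(E+i0)$ exists, I would show that $\sqrt{V}\psi$ is a nonzero eigenvector of $G(E+i0)$ at $-1/\lambda$. Since the latter eigenvalue is nonzero and, by hypothesis, simple, this forces the eigenspace of $H_\lambda|_{{\mathcal H}_V}$ at $E$ to be at most one-dimensional.

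First I would discard a null set of pathological energies. Let $M\subset\R$ be the union of $\sigma_p(H_0)$ (countable, hence of measure zero) with the hypothesized measure-zero set on which $G(E+i0)$ fails to exist. Applying Lemma~\ref{specaverage} to this $M$, for Lebesgue-a.e.\ $\lambda$ the operator $H_\lambda|_{{\mathcal H}_V}$ has no eigenvalues in $M$. I would fix such a $\lambda\neq 0$, so that every eigenvalue $E$ in question automatically satisfies $E\notin\sigma_p(H_0)$ and admits the boundary value $G(E+i0)$.

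For the Birman--Schwinger step proper, from $(H_0-E-i\varepsilon)\psi=-\lambda V\psi-i\varepsilon\psi$ I would obtain
\[
\sqrt{V}\psi=-\lambda\,G(E+i\varepsilon)\sqrt{V}\psi - i\varepsilon\,\sqrt{V}(H_0-E-i\varepsilon)^{-1}\psi,
\]
and then let $\varepsilon\downarrow 0$. The first term converges in norm to $-\lambda G(E+i0)\sqrt{V}\psi$ by the hypothesized operator-norm existence of the boundary value; the second vanishes because $i\varepsilon(H_0-E-i\varepsilon)^{-1}$ converges strongly to $-\chi_{\{E\}}(H_0)$, which is zero since $E\notin\sigma_p(H_0)$. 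The resulting identity is $G(E+i0)\sqrt{V}\psi=-(1/\lambda)\sqrt{V}\psi$.

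It remains to show that $\psi\mapsto\sqrt{V}\psi$ is injective on the eigenspace of $H_\lambda|_{{\mathcal H}_V}$ at $E$. If $\sqrt{V}\psi=0$ then $V\psi=0$, so $H_0\psi=E\psi$ and $\psi\perp R(V)$; a short computation using the eigenvector relation gives $\langle\psi,(H_0-z)^{-1}v\rangle=0$ for all $v\in R(V)$ and $z\in\C\setminus\R$, hence by (\ref{eq:redsub}) $\psi\perp{\mathcal H}_V$, and combined with $\psi\in{\mathcal H}_V$ this forces $\psi=0$. The resulting linear injection lands in the one-dimensional eigenspace of $G(E+i0)$ at $-1/\lambda$, proving simplicity at $E$. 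I expect the main technical point to be the boundary-value limit: extracting $G(E+i0)\sqrt{V}\psi=-(1/\lambda)\sqrt{V}\psi$ from the approximating resolvent equation hinges on both the operator-norm convergence of $G(E+i\varepsilon)\to G(E+i0)$ and the vanishing of the spectral-projection remainder, and both rely on having already removed the null set $M$ via Lemma~\ref{specaverage}.
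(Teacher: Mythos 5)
Your argument is essentially the paper's own proof: exclude a null set of bad energies via Lemma~\ref{specaverage}, fix $\lambda\neq 0$ outside a null set, and map the eigenspace of $H_\lambda|_{{\mathcal H}_V}$ at $E$ injectively by $\sqrt{V}$ into $N(G(E+i0)+\tfrac{1}{\lambda})$, which is at most one-dimensional. The one slip is that your set $M$ contains only $\sigma_p(H_0)$ and the energies where $G(E+i0)$ fails to exist, while your final step invokes simplicity of the non-zero eigenvalues of $G(E+i0)$ at the eigenvalue $E$; you must also include in $M$ the (measure-zero) set of energies where the boundary value exists but is \emph{not} simple, exactly as the paper does with its exceptional set $S$. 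With that one-line enlargement the proof is complete; your injectivity step via orthogonality to ${\mathcal H}_V$ is correct, though the paper gets it more directly by noting that $\sqrt{V}\psi=0$ together with $H_\lambda\psi=E\psi$ would force $E\in\sigma_p(H_0)$, which was already excluded.
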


Before proving this, several comments are in order:

(i) Note that $G(E+i0)$ is compact if it exists. Thus its non-zero
spectrum consists entirely of discrete eigenvalues. By simplicity we
mean algebraic simplicity, i.e.\ all generalized eigenspaces are
one-dimensional (and thus eigenspaces). Note that $G(E+i0)$ is not necessarily selfadjoint.

(ii) The following proof shows that
there is also a local version of the result: If $I\subset \R$ is an
open interval and $G(E+i0)$
has simple non-zero eigenvalues for almost every $E\in I$, then, for
almost every $\lambda \in\R$, $H_{\lambda}|_{{\mathcal H}_V}$ has
simple point spectrum in $I$.

(iii) Theorem~\ref{thm:criterion} does not say anything
about the continuous spectrum of $H_{\lambda}$. When applying
Theorem~\ref{thm:criterion} to Anderson-type models, simplicity of
the entire spectrum of $H_{\lambda}$ (in some interval or the whole
line) follows if spectral localization, i.e.\ absence of continuous
spectrum, is established by separate means.

\begin{proof}
By assumption there exists a set $S\subset \R$ with $|S|=0$ such that,
for every $E\in \R\setminus S$, $G(E+i0)$ exists and has simple
non-zero eigenvalues. Let $M:= S \cup \sigma_p(H_0)$. By Lemma~\ref{specaverage} there
exists a set $A\subset \R$ with $|A|=0$ such that, for every
$\lambda \in \R\setminus A$, $H_{\lambda}|_{{\mathcal H}_V}$ has no
eigenvalues in $M$.

Fix $\lambda \in \R\setminus (A\cup \{0\})$. We will show that all
eigenvalues of $H_{\lambda}|_{{\mathcal H}_V}$ are simple. As
$|A\cup \{0\}|=0$, this proves the Theorem.

Let $E$ be an eigenvalue of $H_{\lambda}|_{{\mathcal H}_V}$. Thus
$E\not\in M$ and, in particular, $\chi_{\{E\}}(H_0)=0$. Also, $G(E+i0)$
exists and has simple non-zero eigenvalues. We will show that the
operator $\sqrt{V}$ defines a one-to-one mapping from
$N((H_{\lambda}-E)|_{{\mathcal H}_V})$ into
$N(G(E+i0)+\frac{1}{\lambda})$.

Let $u\in N((H_{\lambda}-E)|_{{\mathcal H}_V})$ and $u\not= 0$. Then
$H_{\lambda}u=Eu$, which is equivalent to
\begin{equation} \label{eq:equiv}
u= -\lambda (H_0-E-i\varepsilon)^{-1} \sqrt{V} \sqrt{V} u
-i\varepsilon (H_0-E-i\varepsilon)^{-1} u
\end{equation}
for every $\varepsilon>0$. Note that
\[ -i\varepsilon (H_0-E-i\varepsilon)^{-1}
\stackrel{s}{\longrightarrow} \chi_{\{E\}}(H_0) = 0 \]
as $\varepsilon \downarrow 0$. Multiplying \eqref{eq:equiv} by $\sqrt{V}$ and taking
$\varepsilon \downarrow 0$, we infer
\[ \sqrt{V} u = -\lambda G(E+i0) \sqrt{V} u. \]
This shows that $\sqrt{V}u \in N(G(E+i0)+\frac{1}{\lambda})$. Also,
$\sqrt{V}u \not= 0$ as otherwise it would follow from
$H_{\lambda}u=Eu$ that $H_0u=Eu$, a contradiction to $E\not\in M$.
Thus the mapping is one-to-one. From
\[ \mbox{dim} N((H_{\lambda}-E)|_{{\mathcal H}_V})\, \le \,\mbox{dim}
N(G(E+i0)+\frac{1}{\lambda}) \le 1 \]
we conclude that $E$ is a
simple eigenvalue of $H_{\lambda}|_{{\mathcal H}_V}$.
\end{proof}

We devote the rest of this section to a preliminary discussion of how one can hope to apply Theorem~\ref{thm:criterion} to prove simplicity of the point spectrum for Models A, B and C. First, we introduce language which allows to discuss the three models simultaneously.

Thus let $H_{\omega}$ be one of the operators $H_{\omega}^A$, $H_{\omega}^B$ or $H_{\omega}^C$.  $V_j$ is the action of the single-site potential at site $j$, i.e.
\[
(V_j\phi)(n) = \left\{ \begin{array}{ll} W\phi(j), & n=j, \\ 0, & n\not= j, \end{array} \right.
\]
for $\phi \in \ell^2(\Z^d;\C^k)$ in case of Model A, $(V_j\phi)(n) = f(n-jL)\phi(n)$, $n\in \Z^d$, for Model B, and $(V_j\phi)(x) =f(x-j)\phi(x)$, $x\in \R^d$ for Model C. For all three models we can now write
\[ H_{\omega} = H_0 + \sum_{j\in \Z^d} \omega_j V_j,\]
where $H_0$ is either the discrete or continuum Laplacian.

If we also denote by $P_j$ the orthogonal projection onto $R(V_j)$, then we have at least for Models A and B that $\sum_{j\in \Z^d} P_j = I$, a ``covering condition''. This is guaranteed by our assumptions, since $W>$ gives for Model A that $R(V_j) = \{\phi \in \ell^2(\Z^d;\C^k):\,\phi(n)=0 \;\mbox{for all}\;n\not= j\}$ and $f>0$ for Model B means that $R(V_j) = \{\phi\in \ell^2(\Z^d): \phi(n) =0\; \mbox{for all}\;n\not\in C_j\} = \ell^2(C_j)$. Here the {\it tiles} $C_j := C_0 -jL$ are the supports of $f(\cdot-jL)$, $j\in \Z^d$.

The coupling constant $\lambda$ in (\ref{eq:oneparmodel}) is identified with one of the random parameters which we choose to be $\omega_0$.
Writing $V=V_0$ and $\omega = (\hat{\omega}, \omega_0)$, where $\hat{\omega} = (\omega_n)_{n\not= 0}$, we have for all three models
\[
H_{\omega} = H_{\hat{\omega}} + \omega_0 V,
\]
which for fixed $\hat{\omega}$ takes the form of (\ref{eq:oneparmodel}). While we will often keep $\hat{\omega}$ fixed and study the effect of adding $\omega_0 V$ to $H_{\hat{\omega}}$, we stress that we can only expect to prove simplicity of the eigenvalues of $H_{\omega}$ for almost every $\omega$, i.e.\ almost every $\hat{\omega}$ and almost every $\omega_0$. Properties of the ``random environment'' $\hat{\omega}$ will play a role.

Our goal is to show that almost surely $H_{\omega}$ has simple point spectrum, i.e.\ that $H_{\omega}|_{{\mathcal H}^{pp}(H_{\omega})}$ has simple spectrum. Obviously, this follows if we can establish the following two properties:

\begin{equation} \label{eq:BSsimple}
H_{\omega}|_{{\mathcal H}(H_{\omega},R(V))} \;\; \mbox{has simple point spectrum for a.e.\ $\omega$},
\end{equation}
and
\begin{equation} \label{eq:weakcyclic}
{\mathcal H}^{pp}(H_{\omega}) \subset {\mathcal H}(H_{\omega}, R(V)) \;\; \mbox{for a.e.\ $\omega$}.
\end{equation}

Noting (\ref{eq:lambdaind}), (\ref{eq:BSsimple}) can be established via Theorem~\ref{thm:criterion} if we can verify simplicity of the boundary values of the Birman-Schwinger operator $\sqrt{V}(H_{\hat{\omega}}-z)^{-1} \sqrt{V}$ for almost every $\hat{\omega}$. We will thus refer to (\ref{eq:BSsimple}) as {\it simplicity of the Birman-Schwinger operators}. In addition, we will have to establish (\ref{eq:weakcyclic}) which we will refer to as {\it weak cyclicity of $R(V)$} ({\it cyclicity} of $R(V)$ denotes the stronger property that ${\mathcal H}(H_{\omega},R(V)) = {\mathcal H}$. In the language used in the introduction, (\ref{eq:BSsimple}) reflects the absence of local symmetries in the model, while (\ref{eq:weakcyclic}) can be interpreted as absence of global symmetries.

It's quite enlightening so compare the discrete Anderson model (\ref{disAnderson}) and Models A, B and C from the point of view of differences which arise when trying to verify (\ref{eq:BSsimple}) and (\ref{eq:weakcyclic}).

The discrete Anderson model (\ref{disAnderson}) and Model C, the continuum Anderson model, represent two extreme cases. For the discrete Anderson model the boundary values $G(E+i0)$ are rank-one operators and thus trivially have simple non-zero eigenvalues (existence of $G(E+i0)$ for almost every $E$ holds for all three models as discussed in the Appendix). Thus for the discrete Anderson model only weak cyclicity of $R(V)$ (in this case the span of the canonical basis vector $e_0$) needs to be checked. This is essentially what was done in \cite{Simon}, whose arguments can be traced in our discussion in Section~\ref{sec:simon} (and are a special case of the result shown there).

The situation for the continuum Anderson model is reversed. In this case the weak cyclicity of $R(V) = \{f\phi: \phi \in L^2(\R^d)\}$ is well known. In fact, under the additional assumption that $f>0$ on a non-trivial open set it is known that $R(V)$ is cyclic for every Schr\"odinger operator $H=-\Delta+q$ with, say, bounded potential $q$. This is a consequence of unique continuation for Schr\"odinger operators, see e.g.\ \cite{Combes/Hislop}. However, proving simplicity of the non-zero eigenvalues of the infinite rank Birman-Schwinger operator $f^{1/2}(H_{\hat{\omega}}-(E+i0))^{-1} f^{1/2}$ for almost every $\hat{\omega}$ is a hard problem for the continuum Anderson model, reflecting the very rich structure of possible local symmetries in the continuum. In this paper we will have nothing to say about this.

Instead we will focus on verifying (\ref{eq:BSsimple}) and (\ref{eq:weakcyclic}) for Models A and B. For these models both conditions are non-trivial, but, due to the finite rank property of $V$, technically accessible with linear algebra tools. Checking weak cyclicity of $R(V)$ is non-trivial for these models because of the lack of a general unique continuation property for discrete Schr\"odinger equations. A look at our proofs of cyclicity for Models A and B shows that they can be interpreted as salvaging analogs of unique continuation properties in some specific situations.

For Model A we will be able to verify (\ref{eq:BSsimple}) and (\ref{eq:weakcyclic}) and thus prove Theorem~\ref{thm:modelA} in full generality, only requiring the necessary condition that the single site matrix $W$ has simple eigenvalues. For Model B we can prove weak cyclicity of $R(V)$ without further restrictions, but can show simplicity of the Birman-Schwinger operators only for some special cases, leading to Theorem~\ref{thm:modelB}.
One may think of Model B as a discretization of Model C. For a discretization to provide a good approximation of the continuum one needs to choose a fine mesh corresponding to large $C_0$. Thus, trying to consider Model B with large $C_0$ allows to anticipate the difficulties, due to an increasing number of local symmetries, in aiming at ultimately handling the continuum Anderson model. Our proofs will shed some light on this.

\section{A generalization of Simon's argument} \label{sec:simon}

For the rest of this paper we will consider Models A and B only. In both cases the single site potential is a finite rank perturbation, which allows to use the following extension of an argument from \cite{Simon}.

\begin{theorem} \label{thm:simon2}
Let $H$ be self-adjoint in the separable Hilbert space $\mathcal H$, $k\in \N$, and $X$ and $Y$ $k$-dimensional subspaces of $\mathcal H$ with orthogonal projections $P_X$ and $P_Y$. Let $V\ge 0$ with $R(V)=X$ and $H_{\lambda} := H+\lambda V$, $\lambda \in \R$.

Suppose that there exists $z_0 \in \C^+$ such that
\begin{equation} \label{eq:spanX}
\mbox{\rm{span}}\{ R(P_X(H_{\mu}-z_0)^{-1} P_Y)| \;\mu \in \R\} = X.
\end{equation}
Then for Lebesgue-a.e.\ $\lambda\in \R$,
\begin{equation} \label{eq:pppxy}
P^{pp}(H_{\lambda}) X \subset P^{pp}(H_{\lambda}) Y.
\end{equation}
\end{theorem}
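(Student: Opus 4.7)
The plan is to reduce the claimed inclusion $P^{pp}(H_\lambda) X \subset P^{pp}(H_\lambda) Y$ to a statement about individual eigenvectors of $H_\lambda$, and then to close using a Birman--Schwinger-type parametrization combined with Lemma~\ref{specaverage} (spectral averaging). Dually, it suffices to show that for Lebesgue-a.e.\ $\lambda$, every non-zero eigenvector $\psi$ of $H_\lambda$ satisfying $P_Y\psi = 0$ also satisfies $P_X\psi = 0$, and then to lift this eigenvector-level statement to all of $\mathcal H^{pp}(H_\lambda) \cap Y^\perp$.

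For such an eigenvector $\psi$ with eigenvalue $E$, set $\phi := V\psi \in X$. Rearranging the eigenvalue equation gives $(H_\mu - E)\psi = (\mu - \lambda)\phi$ for every $\mu \in \R$, so $\psi = (\mu - \lambda)(H_\mu - E)^{-1}\phi$ whenever $\mu \neq \lambda$ and $E \notin \sigma(H_\mu)$. Projecting by $P_Y$ and using $P_Y\psi = 0$ yields the kernel conditions $M_\mu(E)\phi = 0$ for all admissible $\mu$, where $M_\mu(z) := P_Y(H_\mu - z)^{-1} P_X$. Taking orthogonal complements of ranges in the hypothesis (\ref{eq:spanX}) and noting $M_\mu(z)^* = P_X(H_\mu - \bar z)^{-1} P_Y$, the span condition is equivalent to $\bigcap_{\mu \in \R} \ker M_\mu(\bar z_0) = \{0\}$. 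Since $M_\mu(z)$ is analytic in $z$ off $\sigma(H_\mu)$, this kernel-intersection condition propagates from $\bar z_0 \in \C^-$ to all real $z$ outside an exceptional set of Lebesgue measure zero. Combining the two, we force $\phi = 0$ and hence $\psi = 0$, for all $E$ outside this exceptional set and outside $\sigma_p(H_0)$ (where the parametrization degenerates). The latter null set of $E$-values is excised by applying Lemma~\ref{specaverage} on the reducing subspace $\mathcal H_V := \mathcal H(H, R(V))$, which guarantees that for a.e.\ $\lambda$, $H_\lambda|_{\mathcal H_V}$ has no eigenvalues in any prescribed null set.

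The step I expect to be most delicate is the lifting from this eigenvector-by-eigenvector statement to the full conclusion about $\mathcal H^{pp}(H_\lambda) \cap Y^\perp$: a vector in the pure point subspace that is orthogonal to $Y$ need not decompose as a sum of eigenvectors each individually orthogonal to $Y$, so one cannot naively glue. Exploiting $\dim Y = k < \infty$, I would argue eigenspace by eigenspace, using the argument above to control how $P_X$ and $P_Y$ act on each $\ker(H_\lambda - E)$, and then assembling the per-eigenvalue conclusions via finite-rank linear algebra together with the identities (\ref{eq:lambdaind}) and (\ref{eq:redsub}) relating the resolvent, the finite-rank perturbation $V$, and the reducing subspace $\mathcal H_V$.
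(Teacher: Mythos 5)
Your overall mechanism is the same as the paper's (relate an eigenvector of $H_\lambda$ to the resolvent of $H_\mu$ compressed by the finite-rank projections, propagate the span condition (\ref{eq:spanX}) from $z_0$ to almost every real energy, and use Lemma~\ref{specaverage} to keep eigenvalues out of the exceptional null set), but two steps, as written, do not go through. First, your kernel condition is derived from $\psi=(\mu-\lambda)(H_\mu-E)^{-1}\phi$ ``whenever $E\notin\sigma(H_\mu)$'', and the passage to real energies is justified only by analyticity of $M_\mu(z)$ off $\sigma(H_\mu)$. But the eigenvalues $E$ of $H_\lambda$ that matter typically lie inside the essential spectrum of $H_\mu$, where $(H_\mu-E)^{-1}$ does not exist and analyticity off the spectrum gives nothing on the real axis. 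What exists for a.e.\ $E$ is the boundary value of the finite-rank compression $P_Y(H_\mu-(E+i0))^{-1}P_X$, via the bounded-characteristic/Herglotz theory of Lemmas~\ref{lem:bc} and \ref{lem:green}, and the identity $P_Y(H_\mu-(E+i0))^{-1}V\psi=0$ must be obtained by an $\epsilon\downarrow 0$ limit from $\psi=(\mu-\lambda)(H_\mu-E-i\epsilon)^{-1}V\psi-i\epsilon(H_\mu-E-i\epsilon)^{-1}\psi$, which needs $E\notin\sigma_p(H_\mu)$ for every $\mu$ actually used (not just $\sigma_p(H_0)$), together with the preliminary reduction — available because $\dim X=k<\infty$ — of (\ref{eq:spanX}) to a finite set $N$ of couplings $\mu$, so that the union of the bad sets is a null set that Lemma~\ref{specaverage} can excise. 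Note also that spectral averaging only controls eigenvalues of $H_\lambda|_{{\mathcal H}_V}$; eigenvectors in ${\mathcal H}_V^\perp$ (where $H_\lambda=H$) are not reached by this argument, and for them your conclusion ``$\phi=0$ hence $\psi=0$'' is false — though $P_X\psi=0$ holds there trivially, this has to be said.

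The second, and main, gap is the one you flag yourself: your dualization requires showing that every vector of ${\mathcal H}^{pp}(H_\lambda)$ orthogonal to $Y$ is orthogonal to $X$, while your argument only covers individual eigenvectors orthogonal to $Y$, and the eigencomponents of a $Y$-orthogonal pure-point vector need not themselves be $Y$-orthogonal; the appeal to ``finite-rank linear algebra'' does not fill this. The paper avoids the global dual statement altogether: by finite-dimensional duality inside each eigenspace, your per-eigenvector claim is equivalent to the range inclusion $P_e^\lambda X\subset P_e^\lambda Y$ for each eigenvalue $e$, where $P_e^\lambda=\chi_{\{e\}}(H_\lambda)$ has rank at most $k$, and this is exactly what the paper proves — using the second resolvent identity and $P_e^\lambda=\lim_{\epsilon\downarrow 0}i\epsilon(H_\lambda-e-i\epsilon)^{-1}$ to get $P_e^\lambda P_Y=-(\lambda-\mu)P_e^\lambda V(H_\mu-(e+i0))^{-1}P_Y$, then invoking the a.e.\ surjectivity of the boundary-value family. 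The conclusion (\ref{eq:pppxy}) is then obtained by summing these range inclusions over $e$ as in (\ref{eq:efexp}), with no gluing problem. So the missing assembly step in your proposal is precisely to un-dualize eigenvalue by eigenvalue and sum the range inclusions, rather than attempt to prove the statement about all of ${\mathcal H}^{pp}(H_\lambda)\cap Y^\perp$.
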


Observe that (\ref{eq:pppxy}) implies $P^{pp}(H_{\lambda}) X \subset {\mathcal H}(H_{\lambda},Y)$, which is how it will be used in verifying (\ref{eq:weakcyclic}) for Models A and B below. There we will also use that the Hilbert space can be spanned by subspaces $X$ on which (\ref{eq:pppxy}) holds.

There are two special cases of Theorem~\ref{thm:simon2} worthwhile mentioning: (i) If $P_X (H-z_0)^{-1} I_Y$ is invertible from $X$ to $Y$ (and thus surjective) for one $z_0\in \C^+$, then (\ref{eq:spanX}) trivially holds, requiring only the use of the single coupling constant $\mu=0$. Below we will verify this form of the condition for Model A. (ii) In the rank one case dim$\,X =$ dim$\,Y=1$, i.e.\ the discrete Anderson model (\ref{disAnderson}), the previous special case means non-vanishing of the matrix-element $(H-z_0)^{-1}(x,y)$ for one $z_0\in \C^+$. This is how the argument behind Theorem~\ref{thm:simon2} enters in \cite{Simon}.

\begin{proof}(of Theorem~\ref{thm:simon2})
First note that due to the finite dimension of $X$, there is a finite set $N\subset \R$ (of at most dim$\,X$ elements) such that it suffices to take the span over $\mu\in N$ on the left hand side of (\ref{eq:spanX}).

Let $A$ be the set of all those $t\in \R$ for which either $P_X(H_{\mu}-(t+i0))^{-1} P_Y$ does not exist for at least one $\mu\in N$ or such that
\begin{equation} \label{eq:spanbv}
\mbox{\rm{span}} \{ R(P_X(H_{\mu}-(t+i0))^{-1} P_Y)| \;\mu \in N\}
\end{equation}
is not all of $X$.

We see that $|A|=0$ as follows: For $\phi$ in the Hilbert space and fixed $\mu$, $((H_{\mu}-z)^{-1}\phi,\phi)$ is Herglotz as a function of $z\in \C^+$. Thus, by polarization and Lemma~\ref{lem:bc}, $((H_{\mu}-z)^{-1}\phi,\psi)$ is of bounded characteristic for arbitrary $\phi$ and $\psi$. Thus the matrices $P_X(H_{\mu}-z)^{-1} P_Y$, $\mu\in N$, represented with respect to fixed orthonormal bases in $X$ and $Y$, have entries of bounded characteristic. By Lemma~\ref{lem:green}(a) the boundary values $P_X(H_{\mu}-(t+i0))^{-1} P_Y$ exist for all $\mu\in N$ and a.e.\ $t\in \R$.

Furthermore, by (\ref{eq:spanX}) there is a collection of $k$ columns $f_j(z)$, $j=1,\ldots,k$, chosen from the matrices $P_X(H_{\mu}-z)^{-1} P_Y$, $\mu\in N$, such that the square matrix $(f_1(z), \ldots, f_k(z))$ is invertible at $z=z_0$. Thus by Lemma~\ref{lem:green}(b) we see that the boundary value $(f_1(t+i0),\ldots, f_k(t+i0))$ is invertible for almost every $t$, showing that (\ref{eq:spanbv}) gives all of $X$. This completes the proof of $|A|=0$.

Therefore $A' := A \cup \left( \bigcup_{\mu\in N} \sigma_p(H_{\mu}) \right)$ is a nullset as well. By Lemma~\ref{specaverage} there is a nullset $M \subset \R$ such that
\begin{equation} \label{eq:noevals}
H_{\lambda}|_{{\mathcal H}_V} \:\mbox{has no eigenvalues in $A'$ for all $\lambda \in
\R\setminus M$}.
\end{equation}

Fix $\lambda \in \R \setminus (M\cup N)$ and write
\begin{equation} \label{eq:efexp}
P^{pp}(H_{\lambda})X=\sum_{e\in \sigma_{p}(H_{\lambda})}P_e^{\lambda}X
\end{equation}
with $P_e^{\lambda}:= \chi_{\{e\}}(H_{\lambda})$, the eigenspace of $H_{\lambda}$ to $e$.

For now fix $e \in
\sigma_p(H_{\lambda})$ with $P_e^{\lambda}X\neq\{0\}$ and also fix $\mu \in N$. Thus $e$ is an eigenvalue of $H_{\lambda}|_{{\mathcal H}_V}$,
which by (\ref{eq:noevals}) can not lie in $A'$, in particular, $e \notin \sigma_p(H_{\mu})$.
As $H_{\lambda}$ is obtained via a rank $k$ perturbation of $H_{\mu}$ we have
\[
\mbox{rank} \ P_e^{\lambda}\leq k<\infty
\]

By the Second Resolvent Identity,
\[
(H_{\lambda}-z)^{-1}=(H_{\mu}-z)^{-1}-(\lambda-\mu)
(H_{\lambda}-z)^{-1}V(H_{\mu}-z)^{-1}, \quad z \in \mathbb{C}\setminus
\mathbb{R},
\]
and, in particular, for $\epsilon>0$
\begin{eqnarray*}
i\epsilon P_e^{\lambda}(H_{\lambda}-e-i\epsilon)^{-1}P_Y&=&i\epsilon
P_e^{\lambda}(H_{\mu}-e-i\epsilon)^{-1}P_Y-{} \\
&& {}-(\lambda-\mu) P_e^{\lambda}i\epsilon (H_{\lambda}-e-i\epsilon)^{-1} V
(H_{\mu}-e-i\epsilon)^{-1}P_Y.
\end{eqnarray*}
Since $e\not\in \sigma_p(H_{\mu})$, letting $\epsilon \downarrow 0$ in the last equality
gives
\[
P_e^{\lambda}P_Y=-(\lambda-\mu) P_e^{\lambda} V(H_{\mu}-(e+i0))^{-1} P_Y,
\]
where we have used the fact that
$P_e^{\lambda}=\lim_{\epsilon \downarrow
0}i\epsilon(H_{\lambda}-e-i\epsilon)^{-1}$ in the weak sense (and thus in norm due to finite dimension).
We infer
\[ P_e^{\lambda} R(V(H_{\mu}-(e+i0))^{-1} P_Y) \subset P_e^{\lambda} Y \quad \mbox{for all $\mu\in N$}.\]
But, using $R(V)=X$ and $e\not\in A$,
\begin{eqnarray*}
\lefteqn{\mbox{\rm{span}} \{ R(V(H_{\mu} - (e+i0))^{-1} P_Y): \;\mu\in N\}} \\
& = & \mbox{\rm{span}} \{ R(P_X(H_{\mu}-(e+i0))^{-1} P_Y):\; \mu\in N\} = X,
\end{eqnarray*}
which yields
\[ P_e^{\lambda} X \subset P_e^{\lambda} Y \subset P^{pp}(H_{\lambda}) Y.\]
This holds for every $e\in \sigma_p(H_{\lambda})$ with $P_e^{\lambda} X \not= \{0\}$, so (\ref{eq:efexp}) implies (\ref{eq:pppxy}).
\end{proof}

\section{Model A} \label{sec:modelA}

As a first application of the general theory developed so far, we will now prove Theorem~\ref{thm:modelA} by verifying (\ref{eq:BSsimple}) and (\ref{eq:weakcyclic}).

For the duration of this proof we write $H=H_{\omega}=H_{\omega}^A$. For $|z|> \|H\|$ we have the Neumann series
\begin{equation} \label{eq:Neumann}
(H-z)^{-1} = -\frac{1}{z} (I+ \frac{1}{z}H + \frac{1}{z^2} H^2 + \ldots),
\end{equation}
which will be used in the verification of both, (\ref{eq:weakcyclic}) and (\ref{eq:BSsimple}). We start with the latter.

Here (\ref{eq:Neumann}) implies that
\[ V_0^{1/2} (H-z)^{-1} V_0^{1/2} = -\frac{1}{z}(V_0 + O(1/|z|)). \]
As $V_0|_{R(V_0)} = W$, this shows that simplicity of $W$ leads to simplicity of
\begin{equation} \label{eq:C2}
V_0^{1/2} (H_{\omega}-z)^{-1} V_0^{1/2}: R(P_0) \to R(P_0)
\end{equation}
for all $\omega$ and $z\in \C^+$ with $|z|$ sufficiently large.

As in Section~\ref{sec:abstract} write $\omega = (\hat{\omega}, \omega_0)$ where $\hat{\omega} = (\omega_n)_{n\not= 0}$. As observed in the proof of Theorem~\ref{thm:simon2} the entries of $V_0^{1/2}(H_{\omega}-z)^{-1} V_0^{1/2}$ are of bounded characteristic. It follows from (\ref{eq:C2}) and Lemma~\ref{lem:green}(c) that for every $\hat{\omega}$ the boundary value $V_0^{1/2} (H_{\hat{\omega}}-(t+i0))^{-1} V_0^{1/2}$ exists and has simple non-zero eigenvalues for almost every $t\in \R$.

Thus we can apply Theorem~\ref{thm:criterion} with $H_0 = H_{\hat{\omega}}$ and $V=V_0$. Using that $\omega_0$ has absolutely continuous distribution, we conclude that (\ref{eq:BSsimple}) holds for almost every $\omega$.

The following argument to verify the assumption of Theorem~\ref{thm:simon2} and thus prove (\ref{eq:weakcyclic}) is essentially found in \cite{Simon}. For $j\not= 0$ we get from (\ref{eq:Neumann}) that
\[ P_j (H-z)^{-1} P_0 = -\frac{1}{z^2} P_j H P_0 - \frac{1}{z^3} P_j H^2 P_0 + \ldots.\]
Using that $H$ has only next neighbor hopping terms of magnitude one and letting $|j|=|j_1|+\ldots+|j_d|$ for $j\in\Z^d$, we observe that
\[ P_j H^{\ell} P_0 = 0 \quad \mbox{if $|j|<\ell$}, \]
and
\[ P_j H^{\ell} P_0 = P_j h_0^{\ell} P_0 = C_{j,d}I \quad \mbox{if $|j|=\ell$},\]
where the latter is viewed as an operator from $R(P_0)$ to $R(P_j)$ (with $I$ the matrix representation in the canonical bases of these spaces) and $C_{j,d} \not= 0$ is the number of shortest paths from $0$ to $j$ in $\Z^d$. We conclude that
\[ P_j (H-z)^{-1} P_0 = - \frac{C_{j,d}}{z^{\ell+1}} I + O(1/|z|^{\ell+2}),\]
which is invertible for $|z|$ sufficiently large.

Theorem~\ref{thm:simon2} and a re-sampling argument in the absolutely continuous random variable $\omega_j$ yields that
\[ P^{pp}(H_{\omega}) R(P_j) \subset {\mathcal H}(H_{\omega},R(P_0)) \;\mbox{for a.e.\ $\omega$}. \]
Note that this also holds trivially for $j=0$. Using that $\{R(P_j): j\in \Z^d\}$ spans ${\mathcal H}$ and taking a countable intersection of full measure sets we conclude that
\[ {\mathcal H}^{pp}(H_{\omega}) = P^{pp}(H_{\omega}){\mathcal H} \subset {\mathcal H}(H_{\omega},R(P_0)) \;\mbox{for a.e.\ $\omega$}, \]
which is (\ref{eq:weakcyclic}).

\section{Weak cyclicity for Model B} \label{sec:weakcyclicModB}

The goal of this section is to verify weak cyclicity (\ref{eq:weakcyclic}) for Model B, which we can do in full generality, i.e.\ for any choice of $C_0 = \{0,\ldots,L_1-1\} \times \ldots \times \{0,\ldots, L_d-1\}$ and $f:C_0 \to (0,\infty)$.

\begin{proposition} \label{prop:C1primecrit}
Model B satisfies
\[ {\mathcal H}^{pp}(H_{\omega}^B) \subset {\mathcal H}(H_{\omega}^B, R(P_0)) \]
for almost every $\omega$.
\end{proposition}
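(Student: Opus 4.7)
The plan is to apply Theorem~\ref{thm:simon2} separately for each $j \in \Z^{d}$, in direct analogy with the Model~A argument in Section~\ref{sec:modelA} but with a more delicate verification of the spanning condition (\ref{eq:spanX}). For each fixed $j$ I would take $X = R(P_j)$, $Y = R(P_0)$, $V = V_j$, and as base operator $H := H_{\omega}^{B} - \omega_j V_j$, so that $H + \mu V_j$ corresponds to replacing the $j$-th coupling by $\mu$; writing $\hat{\omega}^{(j)} := (\omega_n)_{n \neq j}$, this $H$ depends only on $\hat{\omega}^{(j)}$. Granted the spanning condition for a.e.\ $\hat{\omega}^{(j)}$, the conclusion of Theorem~\ref{thm:simon2} combined with the absolute continuity of $\omega_j$ would give
\[
P^{pp}(H_{\omega}^{B}) R(P_j) \subset P^{pp}(H_{\omega}^{B}) R(P_0) \subset {\mathcal H}(H_{\omega}^{B}, R(P_0))
\]
for a.e.\ $\omega$. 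Intersecting the corresponding full-measure sets over the countable index $j \in \Z^{d}$ and invoking the covering identity $\sum_{j} P_j = I$ would then yield the claim.

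The main obstacle is the verification of (\ref{eq:spanX}). The Model~A trick of producing an invertible leading coefficient in a pure $z_{0}^{-1}$-expansion of $P_j(H - z_0)^{-1} P_0$ fails in general here: the analogous leading matrix $P_j h_0^{d_j} P_0$, with $d_j := \mathrm{dist}(C_0, C_j)$, is typically of rank strictly less than $|C_0| = \dim R(P_j)$, and one can read off from the closed form of the 1D free Green's function that $\det P_1(h_0 - z)^{-1} P_0 \equiv 0$ already in the simplest nontrivial case $d = 1$, $L_1 = 2$, $j = 1$. I would therefore exploit the additional freedom of varying $\mu$ supplied by Theorem~\ref{thm:simon2}. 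A Neumann expansion in $\mu$, valid for small $|\mu|$ and extended by analytic continuation to all $\mu \in \R$ off a discrete set, gives
\[
P_j(H + \mu V_j - z_0)^{-1} P_0 = \bigl( I + \mu G_{jj}(z_0) W \bigr)^{-1} G_{j0}(z_0),
\]
where $G_{jj}(z) := P_j (H - z)^{-1} P_j$, $G_{j0}(z) := P_j (H - z)^{-1} P_0$, and $W := V_j|_{R(P_j)}$ is invertible because $f > 0$. Expanding $(I + \mu G_{jj} W)^{-1}$ in powers of $\mu$ and taking the span over $\mu$ of the coefficient ranges reduces (\ref{eq:spanX}) to the finite-dimensional assertion that $R(G_{j0}(z_0))$ is cyclic for $G_{jj}(z_0)W$ on $R(P_j)$.

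This cyclicity is an analytic (indeed polynomial in matrix entries) non-vanishing condition on $(\hat{\omega}^{(j)}, z_0) \in \R^{\Z^{d} \setminus \{j\}} \times \C^{+}$, so it suffices to exhibit one admissible point at which it holds; the exceptional set of $\hat{\omega}^{(j)}$ on which it fails for every $z_0$ is then of measure zero, and a suitable $z_0$ can be chosen for each remaining $\hat{\omega}^{(j)}$. My candidate would be $\hat{\omega}^{(j)} = 0$, so that $H = h_0$, together with $|z_0|$ large, where $G_{jj}$ and $G_{j0}$ admit explicit asymptotic expansions in $z_0^{-1}$ as generating functions of lattice walks in $\Z^{d}$. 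At this point cyclicity should follow from (i) $G_{jj}W$ having simple spectrum for generic $z_0$ (a perturbation of its diagonal leading term $-z_0^{-1} W$), and (ii) $R(G_{j0})$ having a non-trivial component in every eigenspace of $G_{jj}W$. The hard part will be turning (i) and (ii) into an argument valid for arbitrary $C_0$, arbitrary $f : C_0 \to (0, \infty)$, and arbitrary $j \neq 0$; this is precisely where a discrete substitute for unique continuation, in the sense foreshadowed at the end of Section~\ref{sec:abstract}, must enter.
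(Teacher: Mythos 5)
Your skeleton (Theorem~\ref{thm:simon2} with $X=R(P_j)$, $Y=R(P_0)$, resampling in $\omega_j$, then covering by the $P_j$) is the right framework, and the reduction of (\ref{eq:spanX}) to the Krylov-type statement that $R(G_{j0}(z_0))$ generates $R(P_j)$ under $G_{jj}(z_0)W$ is a correct and clean reformulation. But the proposition has to hold for \emph{arbitrary} $C_0$ and \emph{arbitrary} $f>0$, and the step you defer (``the hard part will be turning (i) and (ii) into an argument valid for arbitrary $C_0$, $f$, $j$'') is precisely the content of the proof; as it stands the proposal does not prove the statement. Worse, the specific route (i)+(ii) is unlikely to close in general: for $f=\chi_{C_0}$ one has $W=I$, the leading term $-z_0^{-1}W$ of $G_{jj}W$ is maximally degenerate, and proving simplicity of $G_{jj}(z_0)W$ is then a problem of the same nature as the simplicity of the Birman--Schwinger operator treated in Section~\ref{sec:BSsimpleModB}, which the paper can only handle under the extra hypotheses (i)--(iii) of Theorem~\ref{thm:modelB} --- while weak cyclicity must be (and in the paper is) established with no such restrictions. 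Together with the rank deficiency of $G_{j0}$ that you yourself observed, your sufficient conditions can simply fail. The paper avoids this entirely by never pairing $C_j$ with $C_0$ directly: it takes $X=\ell^2(C)$, $Y=\ell^2(C')$ for \emph{neighboring} tiles, proves the spanning condition for the finite two-tile operator $h_0^{(C,C')}+\mu f_C$ by a layer-stripping induction in powers of $\mu$ (Lemma~\ref{lem:2tilespan}, the ``discrete unique continuation'' substitute), and then chains the resulting inclusions $P^{pp}(H_\omega)\ell^2(C)\subset P^{pp}(H_\omega)\ell^2(C')$ from any tile to $C_0$.

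There is a second gap in the transfer step. You want to verify the non-vanishing condition at the single point $\hat\omega^{(j)}=0$, $|z_0|$ large, and then invoke analyticity; but the determinant in question depends on \emph{infinitely} many coordinates of $\hat\omega^{(j)}$, analytic continuation can only be run in finitely many variables at a time, and the tail variables are constrained to $\mathrm{supp}\,\rho$, which need not contain $0$. So ``exhibit one admissible point'' does not by itself yield the condition for a.e.\ $\hat\omega^{(j)}$. The paper handles this by a decoupling device: set the couplings on a boundary layer $\Lambda_B$ of tiles surrounding $C\cup C'$ equal to a large $\lambda$ and show by Schur complementation that $\chi_C(H_{\omega^{(\lambda,\mu)}}-z_0)^{-1}\chi_{C'}$ converges, as $\lambda\to\infty$ and uniformly in the exterior variables, to the two-tile Green's matrix of Lemma~\ref{lem:2tilespan}. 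After that the relevant determinant depends only on the finitely many variables in $\Lambda_B\cup C'$, and iterated analyticity in those variables gives the spanning condition for a.e.\ $\hat\omega$, to which Theorem~\ref{thm:simon2} then applies. Some such reduction to finitely many variables is indispensable and is missing from your argument.
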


We first prove a lemma. For any tile $C=C_m$, $m\in \Z^d$, we denote by $f_C$ the single-site potential on $C$, i.e.\ $f_C = f_m = f(\cdot-mL)$. A neighboring tile $C'=C_{m'}$ of $C$ is a tile such that $m$ and $m'$ coincide in all but one coordinate, and differ by $1$ in the latter. For a pair $(C,C')$ of neighboring tiles $h_0^{(C,C')}$ is the restriction of $h_0$ to $\ell^2(C\cup C')$.

\begin{lemma} \label{lem:2tilespan}
For every $z_0\in \C^+$ and every pair of neighboring tiles $(C,C')$ it holds that
\begin{equation} \label{eq:2tilespan}
\mbox{\rm{span}} \{ R(\chi_C (h_0^{(C,C')}+\mu f_C -z_0)^{-1} \chi_{C'})| \; \mu \in \R\} = \ell^2(C).
\end{equation}
\end{lemma}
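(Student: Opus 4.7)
My strategy is to reduce the spanning statement via block-matrix inversion to a question about the Dirichlet Laplacian on $C$, and then use a Laurent expansion at $\mu=\infty$ combined with a geometric induction on distance from the face $F\subset C$ adjacent to $C'$. Decompose $\ell^2(C\cup C')=\ell^2(C)\oplus\ell^2(C')$; relative to this decomposition $h_0^{(C,C')}$ has diagonal blocks $h_0^C:=\chi_C h_0\chi_C$ (the Dirichlet Laplacian on $C$) and $h_0^{C'}:=\chi_{C'}h_0\chi_{C'}$, and off-diagonal block $T:=\chi_C h_0\chi_{C'}$ whose range is exactly $\ell^2(F)$. Since $z_0\in\C^+$ makes $h_0^C+\mu f_C-z_0$ invertible for every $\mu\in\R$ (self-adjointness of $h_0^C+\mu f_C$), Schur-complement inversion gives
\[
\chi_C(h_0^{(C,C')}+\mu f_C-z_0)^{-1}\chi_{C'}=-(h_0^C+\mu f_C-z_0)^{-1}T\,S(\mu)^{-1}
\]
with $S(\mu)$ an invertible operator on $\ell^2(C')$. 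Hence the range on the left equals $(h_0^C+\mu f_C-z_0)^{-1}\ell^2(F)$, and \eqref{eq:2tilespan} reduces to showing $W:=\mbox{\rm{span}}\{(h_0^C+\mu f_C-z_0)^{-1}\ell^2(F):\mu\in\R\}=\ell^2(C)$.

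Since $f_C$ is a positive diagonal matrix on $\ell^2(C)$, I would next write the Neumann expansion at infinity,
\[
B_\mu:=(h_0^C+\mu f_C-z_0)^{-1}=\sum_{k\ge 0}\frac{(-1)^k}{\mu^{k+1}}(f_C^{-1}(h_0^C-z_0))^k f_C^{-1}.
\]
For fixed $w\in\ell^2(F)$ the map $\mu\mapsto B_\mu w$ is a rational $\ell^2(C)$-valued function all of whose values lie in $W$, and successive limits $\lim_{\mu\to\infty}(\mu^{k+1}B_\mu w-\sum_{j<k}\mu^{k-j}c_j(w))$ extract the Laurent coefficients $c_k(w):=(-1)^k(f_C^{-1}(h_0^C-z_0))^k f_C^{-1}w$. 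Since $W$ is a subspace of the finite-dimensional $\ell^2(C)$, every $c_k(w)$ lies in $W$.

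I would then prove by induction on $k\ge 0$ that $\ell^2(F_k)\subset W$, where $F_k$ is the set of points of $C$ within graph distance $k$ of $F$. The base case $c_0(w)=f_C^{-1}w$ spans $\ell^2(F)$ since $f_C^{-1}$ is a bijection on $\ell^2(F)$. For the inductive step, expand $(f_C^{-1}(h_0^C-z_0))^k$ as a sum of $k$-fold products of $f_C^{-1}h_0^C$ and $-z_0 f_C^{-1}$: any term containing a $z_0$-factor spreads support by fewer than $k$ sites and contributes only to $\ell^2(F_{k-1})\subset W$ by the inductive hypothesis. The surviving term $(f_C^{-1}h_0^C)^k f_C^{-1}\delta_p$ is a weighted sum over length-$k$ nearest-neighbor paths in $C$ starting at $p\in F$; modulo $\ell^2(F_{k-1})$ only paths ending in $F_k\setminus F_{k-1}$ matter, and such a path must decrease the coordinate perpendicular to $F$ by exactly $k$ in $k$ steps, forcing it to be the unique straight perpendicular path. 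Its contribution is therefore a nonzero multiple of $\delta_{q(p)}$, where $q(p)\in F_k\setminus F_{k-1}$ is the endpoint of this path; as $p$ varies over $F$ the points $q(p)$ exhaust $F_k\setminus F_{k-1}$, so $\ell^2(F_k\setminus F_{k-1})\subset W$, closing the induction. Once $k$ equals the depth of $C$ perpendicular to $F$, $F_k=C$ and $W=\ell^2(C)$.

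The main obstacle I anticipate is the combinatorial uniqueness of the straight perpendicular path among length-$k$ paths from $F$ to $F_k\setminus F_{k-1}$: this is the pivotal geometric input that makes a new basis vector appear at each stage of the induction. Fortunately it reduces to a simple coordinate count --- each step changes exactly one coordinate by $\pm 1$, so achieving a net displacement of $-k$ in the direction perpendicular to $F$ in $k$ total steps leaves no budget for any sideways moves.
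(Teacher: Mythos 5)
Your argument is correct, but it takes a genuinely different route from the paper's. The paper proceeds by duality: it assumes $\psi\in\ell^2(C)$ annihilates every range, i.e.\ $\chi_{C'}(h_0^{(C,C')}+\mu f_C-\bar{z}_0)^{-1}\psi=0$ for all $\mu\in\R$, writes $\psi=(h_0^{(C,C')}+\mu f_C-\bar{z}_0)g_\mu$ with $g_\mu$ supported in $C$, and then peels off the transversal sections of $C$ one layer at a time: the three-term recursion for the sections $g_{\mu,k}$ shows that $g_{\mu,k}$ grows like $(-\mu)^{k-1}f_{C,1}\cdots f_{C,k-1}\psi_0$, and since the last equation must hold for all $\mu$ the leading coefficient forces $\psi_0=0$ (here $f>0$ is used); the window is then shifted and the argument iterated until $\psi=0$. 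You instead work directly with the ranges: Schur complementation identifies the range for fixed $\mu$ as $(h_0^{C}+\mu f_C-z_0)^{-1}\ell^2(F)$ (your justification is sound: $A(\mu)=h_0^{C}+\mu f_C-z_0$ and the full two-tile operator are invertible because $z_0\in\C^+$, hence the Schur complement is invertible, and $T=\chi_C h_0\chi_{C'}$ maps $\ell^2(C')$ onto $\ell^2(F)$ since each face site has exactly one neighbor in $C'$); the Laurent expansion at $\mu=\infty$ then places every coefficient $(f_C^{-1}(h_0^{C}-z_0))^k f_C^{-1}w$, $w\in\ell^2(F)$, in the span because the span is finite-dimensional hence closed; and the induction over the layers $F_k$ is closed by the uniqueness of the straight perpendicular path of length $k$, whose weight is a product of values of $f^{-1}$ and hence nonzero. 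The underlying mechanism is the same in both proofs — layer-by-layer propagation perpendicular to the common face, with positivity of $f$ supplying nonvanishing leading terms, i.e.\ a substitute for unique continuation — but yours is the direct form and the paper's the transposed (annihilator) form. Your version makes explicit that only large real $\mu$ is needed and which Laurent coefficient produces each new layer, and it isolates the geometric path-counting input cleanly; the paper's version avoids the block-inversion bookkeeping and works immediately with the two-tile resolvent, at the cost of tracking powers of $\mu$ through the section recursion. Either way, finite-dimensionality of $\ell^2(C)$ yields the finite set $N$ of couplings used afterwards in Proposition~\ref{prop:C1primecrit}.
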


\begin{proof}
Without loss of generality we may assume that $C=C_0$ and $C'=C_{(-1,0,\ldots,0)}$. We have to show that
\begin{eqnarray*}
\lefteqn{\bigcap_{\mu} (R(\chi_C (h_0^{(C,C')}+\mu f_C - z_0)^{-1} \chi_{C'}))^{\perp}} \\
& = & \bigcap_{\mu} N(\chi_{C'} (h_0^{(C,C')}+\mu f_C -\bar{z}_0)^{-1} \chi_C ) = \{0\}.
\end{eqnarray*}
Thus assume that $\psi \in \ell^2(C)$ is such that
\begin{equation} \label{eq:psicond}
\chi_{C'} (h_0^{(C,C')}+\mu f_C - \bar{z}_0)^{-1} \psi = 0 \quad \mbox{for all $\mu\in \R$}.
\end{equation}
Our goal is to show that $\psi=0$.

Let $g_{\mu} := (h_0^{(C,C')}+\mu f_C - \bar{z}_0)^{-1} \psi$. Then supp$\,g_{\mu} \subset C$ and
\begin{equation} \label{eq:gcond}
\psi = (h_0^{(C,C')}+\mu f_C -\bar{z}_0) g_{\mu}
\end{equation}
for all $\mu\in \R$. For $\psi$, $g_{\lambda}$ and $f_C$, all supported on $C$, we will consider transversal sections corresponding to fixed value of the first coordinate, i.e.\ we write
\[ \psi_k(n_2,\ldots,n_d) = \psi(k,n_2,\ldots,n_d),  \]
and similar $g_{\mu,k}$, $f_{C,k}$ for the sections of $g_{\mu}$ and $f_C$. By $h_0^1$ we denote the restriction of the $d-1$-dimensional discrete Laplacian to $\{0,\ldots,L_2-1\} \times \ldots \times \{0,\ldots,L_d-1\}$.

Evaluating (\ref{eq:gcond}) at value $k=-1$ of the first coordinate gives
\begin{equation} \label{eq:k-1}
0=g_{\mu,0}.
\end{equation}
If $L_1=1$, this means $g_{\mu}=0$ and thus $\psi=0$ by (\ref{eq:gcond}). If $L_1>1$, we evaluate (\ref{eq:gcond}) at values $0\le k \le L_1-1$ to get
\begin{equation} \label{eq:k0}
\psi_0 = g_{\mu,1} + (h_0^1 +\mu f_{C,0}-\bar{z}_0) g_{\mu,0},
\end{equation}
\begin{equation} \label{eq:k1}
\psi_k = g_{\mu,k-1} + g_{\mu,k+1} + (h_0^1 +\mu f_{C,k}-\bar{z}_0) g_{\mu,k}, \quad 1\le k \le L_1-2,
\end{equation}
\begin{equation} \label{eq:k2}
\psi_{L_1-1} = g_{\mu,L_1-2} + (h_0^1+\mu f_{C,L_1-1} -\bar{z}_0) g_{\mu,L_1-1}.
\end{equation}

Inserting (\ref{eq:k-1}) into (\ref{eq:k0}) and then, successively for $1\le k\le L_1-2$, into (\ref{eq:k1}) yields
\[ g_{\mu,k} = (-1)^{k-1} \mu^{k-1} f_{C,1} \ldots f_{C,k-1} \psi_0 + O(\mu^{k-2}) \]
as $\mu\to\infty$ for all $k\le L_1-1$. Ultimately, inserting into (\ref{eq:k2}) gives
\[ \psi_{L_1-1} = (-1)^{L_1-1} \mu^{L_1-1} f_{C,1} \ldots f_{C,L_1-1} \psi_0 + O(\mu^{L_1-2}).\]
As this must hold for all $\mu$, we conclude that $\psi_0=0$, and thus, by (\ref{eq:k0}), $g_{\mu,1}=0$.

This allows to reinterpret (\ref{eq:psicond}) as
\begin{equation} \label{eq:psicond2}
\chi_{C_+'} (h_0^{(C,C')}+\mu f_C-\bar{z}_0)^{-1} \psi =0 \quad \mbox{for all $\mu\in \R$}
\end{equation}
and $\psi \in \ell^2(C_-)$. Here $C_+'$ and $C_-$ are the boxes found by moving the left-most layer of $C$ to $C'$, i.e.\
\[ C_- := \{1,\ldots,L_1-1\} \times \{0,\ldots,L_2-1\} \times \ldots \times \{0,\ldots, L_d-1\},\]
\[ C_+' := \{-L_1,\ldots,0\} \times \{0,\ldots,L_2-1\} \times \ldots \times \{0,\ldots, L_d-1\}.\]

This shows that the process of calculating (\ref{eq:gcond}) to (\ref{eq:k2}) can be repeated, now starting with (\ref{eq:psicond2}) and leading to $\psi_1=0$, $g_{\mu,2}=0$. Iterating we find $\psi_0=\psi_1= \ldots = \psi_{L_1-1}=0$, and thus $\psi=0$.

\end{proof}

For the remainder of this and the following section we write $H=H_{\omega}= H_{\omega}^B$.

\begin{proof}(of Proposition~\ref{prop:C1primecrit})
For an arbitrary pair of neighboring tiles $(C,C')$ we can apply Lemma~\ref{lem:2tilespan}.
As $\ell^2(C)$ is finite-dimensional there exists a finite set $N\subset \R$ such that it suffices to take the span over $\mu\in N$ on the left hand side of (\ref{eq:2tilespan}). Let $C=C_m$, $C'=C_{m'}$, $\Lambda_B$ a boundary layer consisting of tiles enclosing $C\cup C'$, $\Lambda_{ext} = \Z^d \setminus \{ C \cup C' \cup \Lambda_B\}$, and for given $\omega = (\omega_{\ell})_{\ell\in \Z^d}$ define
\begin{equation} \label{eq:choice}
\omega_{\ell}^{(\lambda,\mu)} = \left\{ \begin{array}{ll} \mu, & \ell = m,\\ 0, & \ell = m', \\ \lambda, & \mbox{on sites in $\Lambda_B$}, \\ \omega_{\ell}, & \mbox{on sites in $\Lambda_{ext}$}. \end{array} \right.
\end{equation}
We claim that, for fixed $\mu$,
\begin{equation} \label{eq:couplinglimit}
\lim_{\lambda\to\infty} \chi_C (H_{\omega^{(\lambda,\mu)}} -z_0)^{-1} \chi_{C'} = \chi_C (h_0^{(C,C')}+\mu f_C-z_0)^{-1} \chi_{C'}.
\end{equation}
This is shown by a Schur complementation argument:
Decompose $\ell^2(\Z^d) = \ell^2(\Z^d \setminus \Lambda_B) \oplus \ell^2(\Lambda_B)$ and let $P = P_{\Z^d\setminus \Lambda_B} = P_{C\cup C'} \oplus P_{\Lambda_{ext}}$ and $Q= I-P = P_{\Lambda_B}$ be the corresponding orthogonal projections. Write
\[ H_{\omega^{(\lambda,\mu)}} - z_0 = \left( \begin{array}{cc} A & B \\ C & D \end{array} \right) \]
as a block operator with respect to this decomposition. Here $D= Q(h_0-z+\lambda \sum_j V_j)Q$ in $\ell^2(\Lambda_B)$. As $\sum_j V_j$ has a uniform positive lower bound, $D$ is invertible for $\lambda$ sufficiently large and $\lim_{\lambda\to\infty} D^{-1} =0$. $A$, $B$ and $C$ do not depend on $\lambda$.

Thus Schur complementation yields
\[ P(H_{\omega^{(\lambda,\mu)}}-z_0)^{-1} P = (A-BD^{-1}C)^{-1} \to A^{-1} \quad \mbox{as $\lambda\to\infty$}.\]
We also have
\[ A = P_{C\cup C'} (h_0+\mu f_C-z_0)P_{C\cup C'} + P_{\Lambda_{ext}} (H_{\omega}-z_0) P_{\Lambda_{ext}}, \]
giving
\begin{eqnarray*}
\chi_{C} (H_{\omega^{(\lambda,\mu)}}-z_0)^{-1} \chi_{C'} & = & \chi_{C} P_{C\cup C'} (H_{\omega^{(\lambda,\mu)}}-z_0)^{-1} P_{C\cup C'} \chi_{C'} \\
& \to & \chi_{C} (h_0^{(C,C')}+\mu f_C-z_0)^{-1} \chi_{C'}
\end{eqnarray*}
as $\lambda \to \infty$, proving (\ref{eq:couplinglimit}).

Finiteness of $N$ implies the existence of $\lambda_0$ sufficiently large such that
\begin{eqnarray*}
\lefteqn{\mbox{\rm{span}} \{R(\chi_C (H_{\omega^{(\lambda_0,\mu)}}-z_0)^{-1} \chi_{C'}) |\; \mu\in N\}} \\
& = & \mbox{\rm{span}} \{R(\chi_C (h_0^{(C,C')}+\mu f_C-z_0)^{-1} \chi_{C'}) |\; \mu\in N\} \\
& = & \ell^2(C).
\end{eqnarray*}
Picking dim$\,\ell^2(C) = |C|$ linearly independent columns of the matrices $\{ \chi_C (H_{\omega^{(\lambda_0,\mu)}}-z_0)^{-1} \chi_{C'}|\; \mu\in N\}$, we observe that the determinant of the matrix formed by these columns is analytic in each of the parameters $\omega_{\ell}$ corresponding to sites in $\Lambda_B \cup C'$. As the determinant is non-zero for the special choice made in (\ref{eq:choice}), we can successively use analyticity in these parameters to conclude that for almost every $\hat{\omega} = (\omega_{\ell})_{\ell \not= m}$,
\[ \mbox{\rm{span}} \{ R (\chi_C (H_{\hat{\omega}}+\mu f_C-z_0)^{-1} \chi_{C'}) |\; \mu\in N \} = \ell^2(C), \]
where $H_{\hat{\omega}} = h_0 + \sum_{\ell \not= m} \omega_{\ell} f_{\ell}$. We have thus verified the assumptions of Theorem~\ref{thm:simon2} with $X= \ell^2(C)$, $Y=\ell^2(C')$, $H=H_{\hat{\omega}}$ and $V=f_C$ and can therefore conclude that
\[ P^{pp}(H_{\omega}) \ell^2(C) \subset P^{pp}(H_{\omega}) \ell^2(C') \quad \mbox{for almost every $\omega$}.\]

As this holds for any pair of neighboring tiles $(C,C')$, we may iterate to conclude
\[ P^{pp}(H_{\omega}) \ell^2(C_n) \subset P^{pp}(H_{\omega}) \ell^2(C_m) \quad \mbox{for any $(n,m)$ and a.e.\ $\omega$}.\]
Finally, choosing $m=0$ and taking the union over $n$ on the left, we get
\[ {\mathcal H}^{pp}(H_{\omega}) \subset P^{pp}(H_{\omega}) \ell^2(C_0) \subset {\mathcal H}(H_{\omega}, \ell^2(C_0)) \quad \mbox{for a.e.\ $\omega$},\]
as was to be shown.

\end{proof}

\section{Simplicity of the Birman-Schwinger operator for Model B} \label{sec:BSsimpleModB}

This final section is aimed at verifying (\ref{eq:BSsimple}) for Model B, that is simplicity of the restriction of $H=H_{\omega}=H_{\omega}^B$ to the reducing subspace generated by the single site potential. We will accomplish this via Theorem~\ref{thm:criterion} by showing simplicity of the corresponding Birman-Schwinger operators. It is here where we don't have a general argument and will have to use one of the additional conditions given in Theorem~\ref{thm:modelB}. As discussed in Section~\ref{sec:abstract}, when combined with Proposition~\ref{prop:C1primecrit} this completes the proof of Theorem~\ref{thm:modelB}.

\begin{proposition} \label{pro:BC2}
Suppose that for Model B one of the additional assumptions (i), (ii) or (iii) in Theorem~\ref{thm:modelB} holds. Then $H_{\omega}^B|_{{\mathcal H}(H_{\omega}^B, R(V))}$ has simple point spectrum for almost every $\omega$.
\end{proposition}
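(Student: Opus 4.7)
The plan is to apply Theorem~\ref{thm:criterion} with $H_0 = H_{\hat\omega}^B$, $V = V_0$, and coupling constant $\omega_0$, where $\hat\omega = (\omega_j)_{j\neq 0}$ is held fixed; the full-$\omega$ statement then follows by Fubini. By (\ref{eq:lambdaind}) the reducing subspace is independent of $\omega_0$, so matters reduce to showing that for a.e.\ $\hat\omega$ the Birman--Schwinger boundary value
\[
G(E+i0;\hat\omega) = \sqrt{V_0}(H_{\hat\omega}^B - (E+i0))^{-1}\sqrt{V_0}
\]
exists and has simple nonzero eigenvalues for a.e.\ $E\in\R$. Existence is automatic as in Section~\ref{sec:modelA}: the entries of $G(z;\hat\omega)$ are of bounded characteristic on $\C^+$, so Lemma~\ref{lem:green}(a) applies.

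Since $G(z;\hat\omega)$ is finite rank, it suffices to verify simplicity of all eigenvalues of the $|C_0|\times|C_0|$ matrix $G(z;\hat\omega)\big|_{R(V_0)}$. Let $\Delta(z;\hat\omega)$ be the discriminant of its characteristic polynomial: a polynomial combination of entries of $G$, hence itself of bounded characteristic on $\C^+$. By Lemma~\ref{lem:green}(c), if $\Delta(\,\cdot\,;\hat\omega)\not\equiv 0$ on $\C^+$, then $\Delta(E+i0;\hat\omega)\neq 0$ for a.e.\ $E$. It is therefore enough to produce, for a.e.\ $\hat\omega$, one $z_0$ at which $G(z_0;\hat\omega)$ has simple eigenvalues. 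I would obtain such $z_0$ from the large-$|z|$ Neumann expansion
\[
G(z;\hat\omega) = -\sum_{k=0}^\infty z^{-k-1}\sqrt{V_0}\,(H_{\hat\omega}^B)^k\sqrt{V_0},
\]
viewed as an expansion of the matrix $G(z)\big|_{R(V_0)}$ valid for $|z|>\|H_{\hat\omega}^B\|$.

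In case (i), the leading coefficient is $-V_0/z = -\mathrm{diag}(f(n))_{n\in C_0}/z$, whose eigenvalues are distinct by simplicity of $f$, so $G(z)$ has simple eigenvalues for $|z|$ large. In case (ii), $\sqrt{V_0}=\chi_{C_0}$ makes the leading term $-I/z$ degenerate, but the disjointness of $C_0$ from the other tiles $C_j$ gives $\chi_{C_0}H_{\hat\omega}^B\chi_{C_0}=\chi_{C_0}h_0\chi_{C_0}$, and on the one-dimensional segment $C_0=\{0,\dots,L_1-1\}\times\{0\}^{d-1}$ this is the $L_1\times L_1$ Dirichlet Jacobi matrix with simple eigenvalues $2\cos(k\pi/(L_1+1))$. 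Standard nondegenerate perturbation theory then yields $L_1$ simple eigenvalues of $G(z)$ for $|z|$ large.

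Case (iii) is the substantive obstacle, and the reason the randomness of $\hat\omega$ must genuinely enter. Here $T:=\chi_{C_0}h_0\chi_{C_0}$ is the adjacency matrix of the $4$-cycle, with spectrum $\{2,0,0,-2\}$ and two-dimensional null space $N(T)=\mathrm{span}(e_{(0,0)}-e_{(1,1)},\,e_{(1,0)}-e_{(0,1)})$. A direct check shows that $\chi_{C_0}h_0^\ell\chi_{C_0}$ for $\ell=2,3$ and also the $\omega$-linear contribution to $\chi_{C_0}(H_{\hat\omega}^B)^3\chi_{C_0}$ all respect the full symmetry group of the square $C_0$ and act as scalar multiples of the identity on $N(T)$, so none of them can split the degeneracy. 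The splitting is first produced at order $z^{-4}$ by the $\omega$-linear part of $\chi_{C_0}(h_0^2 V h_0 + h_0 V h_0^2)\chi_{C_0}$ with $V=\sum_{j\neq 0}\omega_j f_j$; a computation of the contributions from the four neighbors of $C_0$ yields the restriction to $N(T)$
\[
\bigl[(\omega_{(0,1)}+\omega_{(0,-1)}) - (\omega_{(1,0)}+\omega_{(-1,0)})\bigr]\begin{pmatrix} 0 & 1 \\ 1 & 0 \end{pmatrix}.
\]
The coefficient is a nontrivial linear combination of four independent absolutely continuous random variables, hence nonzero almost surely, and degenerate perturbation theory then gives four distinct eigenvalues of $G(z;\hat\omega)$ for $|z|$ large and a.e.\ $\hat\omega$. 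Combining the three cases with Theorem~\ref{thm:criterion} completes the proof of Proposition~\ref{pro:BC2}.
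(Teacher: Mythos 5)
Your proposal is correct in substance and, for cases (i) and (ii), coincides with the paper's argument (first, respectively second, order of the Neumann expansion, combined with Lemma~\ref{lem:green}(c), Theorem~\ref{thm:criterion} and a Fubini/re-sampling step in $\omega_0$). In case (iii) you identify the same splitting mechanism as the paper: the two-fold degeneracy coming from the null space of the $4$-cycle matrix $\chi_{C_0}h_0\chi_{C_0}$ is broken only at fourth order, by the part of $\chi_{C_0}(h_0^2Vh_0+h_0Vh_0^2)\chi_{C_0}$ linear in the four edge-neighboring couplings, with coefficient proportional to $(\omega_{(0,1)}+\omega_{(0,-1)})-(\omega_{(1,0)}+\omega_{(-1,0)})$; the paper's Proposition~\ref{pro:ab} is exactly the special configuration top $=a$, left $=b$, where this reads $a-b$. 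Where you genuinely diverge is in how the randomness enters. The paper proves simplicity for the deterministic two-site operator $h_{a,b}$, $a\neq b$ (explicit matrices in the basis (\ref{eq:basis}), Schur complementation of the degenerate $2\times 2$ block, Rouch\'e), and then transfers this to $H_\omega$ for a.e.\ $\omega$ via a Combes--Thomas estimate (to suppress the potential outside a large box) plus analyticity of the Sylvester determinant in the finitely many couplings inside the box. You instead expand $\chi_{C_0}(H_{\hat\omega}-z)^{-1}\chi_{C_0}$ directly for the full random environment, observing that the relevant fourth-order coefficient involves only the four neighboring couplings while the Neumann remainder is $O(|z|^{-4})$ uniformly in $\hat\omega$ (compact support of $\rho$); the splitting coefficient is then a nontrivial linear functional of four independent a.c.\ variables, hence a.s.\ nonzero. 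This buys a real simplification: no Combes--Thomas step and no analytic continuation in many variables are needed, and one gets an explicit a.s.\ condition. What it costs is that the perturbation bookkeeping must be done completely in the random setting, and two points should be written out to close the argument: (a) at fourth order you must also check that the remaining contributions, $\chi_{C_0}h_0^4\chi_{C_0}$ and the $\omega$-quadratic term $\chi_{C_0}h_0V^2h_0\chi_{C_0}$, act as scalars on $N(T)$ -- they do, as the paper's displayed computation shows, but your symmetry remark only covers orders two and three, and without this check the displayed splitting matrix is not yet justified as the full non-scalar part at that order; (b) ``degenerate perturbation theory'' must be quantified: the off-diagonal coupling between $N(T)$ and the $\pm 2$-eigenspaces first appears at relative order $z^{-2}$ and therefore influences the degenerate pair only at order $z^{-4}$, strictly below the order-$z^{-3}$ splitting; the paper makes this precise through the Schur complement and Rouch\'e's theorem, and an argument of this kind is needed in your version as well.
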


We will prove this by establishing that for almost every $\omega$ there exists $z\in \C^+$ such that
\begin{equation} \label{eq:modelBC2}
\sqrt{f} (H_{\omega}-z)^{-1} \sqrt{f}: \ell^2(C_0) \to \ell^2(C_0)
\end{equation}
is simple. In fact, under conditions (i) or (ii) of Theorem~\ref{thm:modelB} this will hold deterministically, i.e.\ for every $\omega$, but in case of (iii) we only get an almost sure result. Based on this and Theorem~\ref{thm:criterion}, Proposition~\ref{pro:BC2} now follows with the same argument which was used for Model A in Section~\ref{sec:modelA}. Here it suffices to know that (\ref{eq:modelBC2}) holds almost surely.

As in Section~\ref{sec:modelA} for Model A, our argument starts with the Neumann series (\ref{eq:Neumann}). The easiest case is (i), i.e.\ simplicity of $f$, in which case using only the first order approximation in (\ref{eq:Neumann}) gives
\begin{equation} \label{eq:casei}
-z\sqrt{f}(H-z)^{-1} \sqrt{f} = f+O(1/|z|)
\end{equation}
as an operator in $\ell^2(C_0)$ and for $|z|\to\infty$. Since $f$ is simple it follows that $\sqrt{f} (H-z)^{-1} \sqrt{f}$ is simple for sufficiently large $|z|$.

Now consider the case (ii), $C_0 =\{0\} \times \{0,\ldots,L-1\}$, $f=\chi_{C_0}$. In this case $\sqrt{f} = \chi_{C_0}$ and we use the second order approximation in (\ref{eq:Neumann}) to conclude that
\begin{equation} \label{eq:caseii1}
-z\chi_{C_0} (H-z)^{-1} \chi_{C_0} = \chi_{C_0} +\frac{1}{z} \chi_{C_0} h_0 \chi_{C_0} + \frac{\omega_0}{z} \chi_{C_0} + O(1/|z|^2)
\end{equation}
and thus
\begin{equation} \label{eq:caseii2}
z\left( -z \chi_{C_0} (H-z)^{-1} \chi_{C_0} - (1+\frac{\omega_0}{z}) \chi_{C_0}\right) = \chi_{C_0} h_0 \chi_{C_0} + O(1/|z|)
\end{equation}
as $|z|\to\infty$. In the canonical basis of $\ell^2(C_0)$,
\[ \chi_{C_0} h_0 \chi_{C_0} = \left( \begin{array}{cccc} 0 & 1 & &  \\ 1 & \ddots & \ddots &  \\  & \ddots & \ddots & 1 \\ & & 1 & 0 \end{array} \right), \]
a finite Jacobi matrix with simple eigenvalues. Thus the left hand side of (\ref{eq:caseii2}) and therefore $\chi_{C_0} (H-z)^{-1} \chi_{C_0}$, is simple for $|z|$ sufficiently large.

So far our arguments can be summarized as follows: In case (i) the first term in the asymptotic expansion (\ref{eq:Neumann}) suffices to break all degeneracies. For case (ii) the degeneracies are broken by the second term in the expansion.

Case (iii) is considerably more complicated. We will have to explicitly calculate several more terms in the asymptotic expansion. Degeneracies will not be broken completely by including the next term in the series, but only partly. Thus we have to control the effect of terms of different orders on eigenvalues carefully, to avoid that eigenvalues which are split by lower order terms become degenerate again by adding higher order terms to split the remaining degeneracies.

Instead of the full random operator $H_{\omega}$ we will start by considering the operator $h_{a,b}=h_0 + V_{a,b}$ with potential restricted to two sites,
\[ V_{a,b}(j) := a \chi_{C_0}(j_1,j_2-2) + b\chi_{C_0}(j_1+2,j_2) \]
for all $j=(j_1,j_2) \in \Z^2$. Thus supp$\,V_{a,b} = C_{(0,1)} \cup C_{(-1,0)}$, only the sites above and to the left of $C_0$ are occupied.

\begin{proposition} \label{pro:ab}
If $a\not= b$, then $\chi_{C_0}(H_{a,b}-z)^{-1} \chi_{C_0}$ as an operator on $\ell^2(C_0)$ has simple eigenvalues for $|z|$ sufficiently large.
\end{proposition}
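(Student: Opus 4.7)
The plan is to expand $\chi_{C_0}(h_{a,b}-z)^{-1}\chi_{C_0}$ for large $|z|$ via a Schur complement and a Neumann series, then apply degenerate perturbation theory to the resulting $4\times 4$ matrix. Setting $P:=\chi_{C_0}$, $Q:=I-P$, $B:=Qh_0P$, and $\tilde h_0:=Qh_0Q$, the Schur complement identity gives
\[
\chi_{C_0}(h_{a,b}-z)^{-1}\chi_{C_0}=\bigl(A_{1}-zI-B^{*}(\tilde h_0+V_{a,b}-z)^{-1}B\bigr)^{-1}
\]
with $A_{1}:=Ph_0P$. Writing $\epsilon:=1/z$ and expanding the inner resolvent as a Neumann series, simplicity of the eigenvalues of the left-hand side is equivalent to simplicity of
\[
\Phi(\epsilon):=A_{1}+2\epsilon I+\epsilon^{2}K_{1}+\epsilon^{3}K_{2}+O(\epsilon^{4}),\qquad K_{n}:=B^{*}(\tilde h_0+V_{a,b})^{n}B,
\]
where the $2\epsilon I$ term comes from $B^{*}B=2I$ (each site of $C_0$ has exactly two neighbors in $\Z^{2}\setminus C_0$). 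Since $A_{1}$ is the adjacency matrix of the 4-cycle on $C_0$, with spectrum $\{2,0,0,-2\}$, the simple eigenvalues $\pm 2$ perturb analytically to simple branches, and the task reduces to splitting the doubly degenerate $0$-eigenvalue with eigenspace $P_0:=\operatorname{span}\{e_{-},f_{-}\}$, $e_{-}:=\delta_{(0,0)}-\delta_{(1,1)}$, $f_{-}:=\delta_{(0,1)}-\delta_{(1,0)}$.

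The analysis then rests on three explicit computations on $P_{0}$. First, the pure-potential contribution is scalar: $P_{0}B^{*}V_{a,b}BP_{0}=\tfrac{a+b}{2}P_{0}$, because each single-tile projection $P_{0}B^{*}\chi_{C_{(0,1)}}BP_{0}$ and $P_{0}B^{*}\chi_{C_{(-1,0)}}BP_{0}$ evaluates directly to $\tfrac12 P_{0}$. Second, the pure-Laplacian projections $P_{0}B^{*}\tilde h_0^{k}BP_{0}$ are scalar multiples of $P_{0}$ for every $k\ge 0$: the $D_{4}$-symmetry group of $h_0$ preserving $C_0$ acts on $P_{0}$ as the unique two-dimensional irreducible representation of $D_{4}$, so by Schur's lemma any $D_{4}$-equivariant operator restricts to a scalar on $P_{0}$. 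Third, the mixed cross term
\[
P_{0}\bigl(B^{*}V_{a,b}\tilde h_0 B+B^{*}\tilde h_0 V_{a,b}B\bigr)P_{0}=(b-a)\begin{pmatrix}0 & 1\\ 1 & 0\end{pmatrix}
\]
in the basis $\{e_{-},f_{-}\}$, obtained by enumerating length-three walks from $C_0$ that traverse $\operatorname{supp}V_{a,b}$ exactly once. The first two items force $P_{0}K_{1}P_{0}=\tfrac{a+b}{2}P_{0}$; combining with $P_{0}B^{*}V_{a,b}^{2}BP_{0}=\tfrac{a^{2}+b^{2}}{2}P_{0}$ and the cross-term computation yields $P_{0}K_{2}P_{0}=\gamma P_{0}+(b-a)\begin{pmatrix}0 & 1\\ 1 & 0\end{pmatrix}$ for some scalar $\gamma=\gamma(a,b)$, whose eigenvalues $\gamma\pm(b-a)$ are distinct precisely when $a\neq b$. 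Standard degenerate Rayleigh--Schr\"odinger perturbation (the Schur-complement corrections to $\Phi(\epsilon)-\mu I$ near $\mu=0$ entering only at $O(\epsilon^{4})$) then gives the two eigenvalues of $\Phi(\epsilon)$ near $2\epsilon$ as
\[
\lambda^{\pm}(\epsilon)=2\epsilon+\tfrac{a+b}{2}\epsilon^{2}+\epsilon^{3}\bigl(\gamma\pm(b-a)\bigr)+O(\epsilon^{4}),
\]
distinct for small $\epsilon\neq 0$; together with the simple branches near $\pm 2$ this produces four distinct eigenvalues of $\Phi(\epsilon)$ and hence simplicity of $\chi_{C_0}(h_{a,b}-z)^{-1}\chi_{C_0}$ for $|z|$ sufficiently large.

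The main obstacle is verifying that the degeneracy really survives through orders $\epsilon$ and $\epsilon^{2}$, so that the splitting is genuinely governed by the cross term at order $\epsilon^{3}$. The scalar character of the low-order effective operators on $P_{0}$ has a twofold origin: Schur's lemma for the pure-Laplacian pieces (reflecting the $D_{4}$-irreducibility of $P_{0}$), and a numerical feature of this specific tile geometry that forces each individual single-tile projection to be a multiple of the identity on $P_{0}$, so that $V_{a,b}$ alone can contribute only symmetric functions of $a$ and $b$. Only the mixed length-three walks, which visit and then exit $\operatorname{supp}V_{a,b}$, break this residual symmetry and transport the asymmetry $a-b$ onto $P_{0}$, which is where the hypothesis $a\neq b$ is used decisively.
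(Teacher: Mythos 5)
Your proposal is correct and follows essentially the same route as the paper: a large-$|z|$ expansion of the compressed resolvent (you via a Schur complement of $h_{a,b}-z$ followed by a Neumann series, the paper via the Neumann series of $(h_{a,b}-z)^{-1}$ directly), identification of the doubly degenerate zero eigenvalue of $\chi_{C_0}h_0\chi_{C_0}$, the observation that the order-$1/z^2$ contribution acts as a scalar on that two-dimensional subspace while the order-$1/z^3$ mixed Laplacian--potential term splits it by $\pm(a-b)$, and control of the coupling to the simple eigenvalues near $\pm2$ at order $1/z^4$ through a $2\times2$ Schur complement (the paper closes with Rouch\'e, you with the equivalent $2\times2$ discriminant estimate). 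Your $D_4$/Schur's-lemma argument for the scalar character of the pure-Laplacian blocks is a pleasant conceptual substitute for the paper's explicit matrix computations, and your key evaluations ($P_0B^*V_{a,b}BP_0$ scalar, the $(b-a)$ cross term in the rotated basis, which matches the paper's $\mathrm{diag}(a-b,b-a)$) check out.
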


\begin{proof}
For matrix-representations of operators in $\ell^2(C_0)$ we will throughout use the following orthonormal basis, which is best suitable to reflect the various symmetries in the model (and which need to be broken):
\begin{equation} \label{eq:basis}
\delta_1 = \frac{1}{2} \left( \begin{array}{rr} 1 & 1 \\ 1 & 1 \end{array} \right), \delta_2 = \frac{1}{2} \left( \begin{array}{rr} 1 & -1 \\ -1 & 1 \end{array} \right), \delta_3 = \frac{1}{2} \left( \begin{array}{rr} 1 & 1 \\ -1 & -1 \end{array} \right), \delta_4 = \frac{1}{2} \left( \begin{array}{rr} 1 & -1 \\ 1 & -1 \end{array} \right),
\end{equation}
where we represent functions on $C_0$ as $2\times 2$-arrays.

We have $\chi_{C_0} h_{a,b} = \chi_{C_0} h_0$ and $h_{a,b} \chi_{C_0} = h_0 \chi_{C_0}$. Thus the expansion (\ref{eq:Neumann}) written down up to fourth order yields, as an operator in $\ell^2(C_0)$,
\begin{eqnarray} \label{eq:caseiii1}
\lefteqn{z(-z\chi_{C_0} (h_{a,b}-z)^{-1} \chi_{C_0} -I)} \\ \nonumber
& = & \chi_{C_0} h_0 \chi_{C_0} + \frac{1}{z} \chi_{C_0} h_0^2 \chi_{C_0} + \frac{1}{z^2} h_0(h_0+V_{a,b})h_0\chi_{C_0} \\ \nonumber
& & \mbox{} + \frac{1}{z^3} h_0(h_0+V_{a,b})^2 h_0\chi_{C_0} + O(1/|z|^4).
\end{eqnarray}

The calculation of the various terms on the right hand side of (\ref{eq:caseiii1}) should be done geometrically, starting from the arrays giving the vectors $\delta_i$, $i=1,\ldots,4$, and using that $h_0$ acts on every two-dimensional array of numbers by adding up all neighboring values at each site.

With considerable effort we get
\[
\chi_{C_0} h_0 \chi_{C_0}  =  \left( \begin{array}{cccc} 2 & 0 & 0 & 0 \\ 0 & -2 & 0 & 0 \\ 0 & 0 & 0 & 0 \\ 0 & 0 & 0 & 0 \end{array} \right), \]
\[
\chi_{C_0} h_0^2 \chi_{C_0} = \left( \begin{array}{cccc} 6 & 0 & 0 & 0 \\ 0 & 6 & 0 & 0 \\ 0 & 0 & 2 & 0 \\ 0 & 0 & 0 & 2 \end{array} \right), \]
\[
\chi_{C_0} h_0^3 \chi_{C_0} = \left( \begin{array}{cccc} 18 & 0 & 0 & 0 \\ 0 & -18 & 0 & 0 \\ 0 & 0 & 0 & 0 \\ 0 & 0 & 0 & 0 \end{array} \right), \]
\[
\chi_{C_0} h_0V_{a,b} h_0 \chi_{C_0} = \left( \begin{array}{cccc} a+b & 0 & a & b \\ 0 & a+b & b & a \\ a & b & a+b & 0 \\ b & a & 0 & a+b \end{array} \right), \]
\[
\chi_{C_0} h_0 (h_0+V_{a,b})^2 h_0 \chi_{C_0} = (12+\frac{a^2+b^2}{2})I + \left( \begin{array}{cccc} * & * & * & * \\ * & * & * & * \\ * & * & a-b & 0 \\ * & * & 0 & b-a \end{array} \right). \]
For the latter matrix we will only need the lower right $2\times 2$-block.

Thus, suppressing constant multiples of the $4\times 4$-identity matrix, we find that it suffices to show simplicity of the $2\times 2$-block matrix
\[ g_{a,b}(z) = \left( \begin{array}{cc} A(z) & B(z) \\ C(z) & D(z) \end{array} \right), \]
where
\[ A(z) = \left( \begin{array}{rr} 2 & 0 \\ 0 & -2 \end{array} \right) + O(1/|z|), \]
\[ B(z) = C(z) = \frac{1}{2z^2} B_0 + O(1/|z|^3), \quad B_0 := \left( \begin{array}{cc} a & b \\ b & a \end{array} \right), \]
\[ D(z) = \frac{1}{z^3} D_0 + O(1/|z|^4), \quad D_0 := \left( \begin{array}{cc} a-b & 0 \\ 0 & b-a \end{array} \right). \]

For $|z|$ sufficiently large, $g_{a,b}$ has one eigenvalue each near $2$ and $-2$ and two eigenvalues (counted with multiplicity) near $0$. The latter two eigenvalues satisfy $\lambda = O(1/|z|)$ and we must show that they are distinct. For each of these eigenvalues $A-\lambda I$ is invertible and we can therefore use Schur complementation to find the corresponding eigenvectors: Suppose that
\[ \left( \begin{array}{cc} A-\lambda I & B \\ C & D-\lambda I \end{array} \right) \left( \begin{array}{c} \phi_1 \\ \phi_2 \end{array} \right) = 0.\]
Then $(A-\lambda I)\phi_1 + B\phi_2 =0$ and $C\phi_1+(D-\lambda I)\phi_2 =0$ and we can eliminate $\phi_1$ to get
\[ -C(A-\lambda I)^{-1} B\phi_2 +(D-\lambda I)\phi_2 =0.\]
The two eigenvalues of $g_{a,b}$ with $\lambda=O(1/|z|)$ are therefore roots of
\begin{equation} \label{eq:schurdet}
\det (D-\lambda I -C(A-\lambda I)^{-1} B) =0.
\end{equation}
From the above expressions for $A$, $B$, $C$ and $D$ we see that
\[ D-\lambda I-C(A-\lambda I)^{-1}B = \frac{1}{z^3} D_0 -\lambda I + O(1/|z|^4), \]
and thus, calculating the determinant on the right,
\begin{equation} \label{eq:calcdet}
0 = \lambda^2 - \frac{1}{z^6}(a-b)^2 + O(1/|z|^7) + O(\lambda/|z|^4).
\end{equation}
Using that $\lambda = O(1/|z|)$ in the last term gives $\lambda^2=O(1/|z|^5)$ and thus the improved bound $\lambda=O(1/|z|^{5/2})$. Therefore (\ref{eq:calcdet}) becomes
\begin{equation} \label{eq:calcdet2}
0 = \lambda^2 -\frac{1}{z^6} (a-b)^2 + O(1/|z|^{13/2}).
\end{equation}
Applying Rouch\'e's Theorem to the function $f(\lambda)=\lambda^2 -\frac{1}{z^6}(a-b)^2$ and the contours $\gamma_{\pm}$ given by circles centered at $\pm \frac{a-b}{z^3}$ and radius $1/|z|^{3+\varepsilon}$, $0<\varepsilon < 1/2$, shows that (\ref{eq:calcdet2}) has one root each in the interior of the disjoint contours $\gamma_{\pm}$ for $|z|$ sufficiently large. This completes the proof of Proposition~\ref{pro:ab}.
\end{proof}

We now return to the full random operator $H_{\omega}=h_0 + \sum_{n\in \Z^2} \omega_n f_n$ from case (iii) of Proposition~\ref{pro:BC2}. Fix values of $a$ and $b$ with $a\not= b$ and let
\[ h_{\omega,L}^0 := h_{a,b} + V_{\omega,L},\]
where
\[ V_{\omega,L} := \sum_{|n|_{\infty}>L} \omega_n f_n.\]
By the resolvent identity we have
\begin{equation} \label{eq:residentity}
\chi_0 (h_{\omega,L}^0-z)^{-1} \chi_0 = \chi_0 (h_{a,b}-z)^{-1} \chi_0 - \chi_0 (h_{\omega,L}^0-z)^{-1} V_{\omega,L} (h_{a,b}-z)^{-1} \chi_0.
\end{equation}
Using Proposition~\ref{pro:ab}, fix $z$ with $|z|$ sufficiently large such that $\chi_0(h_{a,b}-z)^{-1}\chi_0$ is simple and let
\[ \delta := \min\{ |\lambda-\mu|: \,\lambda, \mu \:\mbox{eigenvalues of $\chi_0(h_{a,b}-z)^{-1}\chi_0$, $\lambda \not= \mu$}\}.\]

We have supp$\,V_{\omega,L} \subset \Z^2 \setminus \left( \bigcup_{|n|_{\infty} \le L} C_n \right)$. Thus we find from a Combes-Thomas type estimate (see e.g.\ Chapter~11 of \cite{Kirsch} for a proof in the setting of discrete Schr\"odinger operators) that there are $C<\infty$ and $\eta>0$ such that
\begin{eqnarray} \label{eq:combesthomas}
\lefteqn{\|\chi_0 (h_{\omega,L}^0-z)^{-1} V_{\omega,L} (h_{a,b}-z)^{-1} \chi_0\|} \\ & \le & \|\chi_0 (h_{\omega,L}^0-z)^{-1} \sqrt{V_{\omega,L}}\| \|\sqrt{V_{\omega,L}} (h_{a,b}-z)^{-1} \chi_0\| \nonumber \\
& \le & (Ce^{-\eta L})^2 \nonumber
\end{eqnarray}
for all $L\in \N$ and uniformly in all $(\omega_n)_{|n|_{\infty}>L}$ with $\omega_n \in \mbox{supp}\,\rho$. Now fix $L$ sufficiently large such that the right hand side of (\ref{eq:combesthomas}) is less than $\delta/2$. By (\ref{eq:residentity}) we conclude that $\chi_0 (h_{\omega,L}-z)^{-1} \chi_0$ is simple.

To complete the proof we now use analyticity of $(H_{\omega}-z)^{-1}$ in the finitely many variables $\omega_n$, $|n|_{\infty} \le L$:

Fix $(\omega_n)_{|n|_{\infty}>L}$ with $\omega_n \in \,\mbox{supp}\,\rho$. Let $S$ be the Sylvester matrix (\ref{sylvestermatrix}) of $\chi_0(H_{\omega}-z)^{-1} \chi_0$. Then $\det S$ is analytic in each of the variables $\omega_n$, $|n|_{\infty} \le L$. For the particular choice of these variables given by the potential $V_{a,b}$ we get from simplicity of $\chi_0 (h_{\omega,L}-z)^{-1} \chi_0$ that $S$ is non-zero. Using analyticity in each of the variables $\omega_n$, $|n|_{\infty}\le L$, iteratively we conclude that $\det S$ is non-zero for Lebesgue-a.e.\ $(\omega_n)_{|n|_{\infty} \le L} \in \R^{(2L+1)^2}$. As $\mu$ is absolutely continuous, this also holds with respect to the product measure on $\R^{(2L+1)^2}$ generated by $\mu$. As discussed in the proof of Lemma~\ref{lem:green}(c), a matrix is simple if and only if the determinant of its Sylvester matrix is non-zero. Recalling that the choice of $\omega_n\in \,\mbox{supp}\,\rho$, $|n|_{\infty}> L$, was arbitrary, this completes the proof of almost sure simplicity of $\chi_0(H_{\omega}-z)^{-1} \chi_0$ and therefore Proposition~\ref{pro:BC2} for case (iii).

\begin{appendix}

\section{Background}

For the sake of completeness, we use this appendix to collect some classical facts on boundary values of analytic functions on the upper half plane and derive the properties of boundary values of Green's function which were used above.

An analytic function $f:\C^+\rightarrow \C^+$ is called a Herglotz function. A function $f: \C^+ \rightarrow \C$ is said to be of bounded characteristic if there exist functions $g$ and $h$, both bounded
and analytic in $\mathbb{C}^+$, with
\begin{equation}
f(z)=\frac{g(z)}{h(z)} \quad \quad \mbox{for all $z \in \C^+$}.
\end{equation}

\begin{lemma}\label{lem:bc}
(a) The set of functions of bounded
  characteristic is closed under scalar multiplication, addition,
  and multiplication.

(b) Herglotz functions are of bounded characteristic.

(c) If $f$ has bounded characteristic, then
  $f(t+i0):=\lim_{\epsilon \downarrow 0}f(t+i\epsilon)$ exists for Lebesgue-almost every $t \in \R$.
  If $|\{t:f(t+i0)=0\}|>0$, then $f$ is identically zero.
\end{lemma}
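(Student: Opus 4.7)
The plan is to handle the three parts separately using standard tools from Hardy-space theory. For (a), given representations $f_j=g_j/h_j$ ($j=1,2$) with $g_j$ and $h_j$ bounded and analytic on $\C^+$, I would just observe that $cf_1=(cg_1)/h_1$, $f_1+f_2=(g_1h_2+g_2h_1)/(h_1h_2)$, and $f_1f_2=(g_1g_2)/(h_1h_2)$ each present the quantity in question as a quotient of bounded analytic functions, since the bounded analytic functions on $\C^+$ form an algebra under pointwise sum and product.

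For (b), I would apply the Cayley transform $w\mapsto(w-i)/(w+i)$, which maps $\C^+$ biholomorphically onto the open unit disk. If $f$ is Herglotz, then $\varphi:=(f-i)/(f+i)$ is analytic on $\C^+$ with $|\varphi|<1$, hence bounded and analytic. Inverting yields $f=i(1+\varphi)/(1-\varphi)$, so that $g:=i(1+\varphi)$ and $h:=1-\varphi$ are both bounded and analytic (with $h$ in fact zero-free on $\C^+$, since $\varphi\ne 1$), displaying $f$ as being of bounded characteristic.

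For (c), I would transfer to the open unit disk via the Cayley transform so that classical Fatou and Hardy-space uniqueness results apply to the bounded analytic factors. Writing $f=g/h$, Fatou's theorem provides nontangential boundary values of $g$ and $h$ at almost every $t\in\R$; at any such $t$ where additionally $h(t+i0)\ne 0$, the quotient $f(t+i0)$ exists. The set on which $h(t+i0)=0$ has measure zero, because the $H^\infty$ boundary-uniqueness theorem---namely, that a bounded analytic function whose boundary values vanish on a positive-measure subset of $\R$ must vanish identically---would otherwise force $h\equiv 0$, making $g/h$ meaningless. For the vanishing assertion, if $|\{t:f(t+i0)=0\}|>0$, then after discarding the measure-zero set on which $h(t+i0)=0$ we still have $g(t+i0)=0$ on a set of positive measure, and the same uniqueness theorem applied to $g$ yields $g\equiv 0$, whence $f\equiv 0$. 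The only nontrivial external input is the $H^\infty$ boundary-uniqueness theorem, which I would cite from a standard reference such as Duren's \emph{Theory of $H^p$ Spaces} rather than reprove; no substantive obstacle is expected to arise.
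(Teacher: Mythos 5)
Your proposal is correct and follows essentially the same route as the paper: parts (a) and (b) are the same elementary algebra-of-quotients argument and the same Cayley-transform trick $g=(f-i)/(f+i)$, $f=i(1+g)/(1-g)$. For part (c) the paper simply cites the classical literature (Privalov, Garnett), while you spell out the standard deduction from Fatou's theorem together with the $H^\infty$ boundary-uniqueness theorem, which is exactly the argument those references contain, so there is no substantive difference.
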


\begin{proof}
Part (a) is elementary. To show (b), let $f$ be Herglotz and
\begin{equation}
g=\frac{f-i}{f+i}
\end{equation}
Then $g$ is bounded, analytic, and
\begin{equation}
f=\frac{-i(1+g)}{g-1}
\end{equation}
is the ratio of bounded analytic functions. Part (c) is a classical result which can be found, for example, in the books \cite{Privalov} or \cite{Garnett}.

\end{proof}

Below, we say that a square matrix is simple if all its generalized eigenspaces are one-dimensional (and thus eigenspaces).

\begin{lemma} \label{lem:green}
Let $k\in \N$ and $H$ be a $k\times k$-matrix-valued function on the upper half plane $\C^+$, such that all its entries are of bounded characteristic. Then
(a) $H(t+i0) := \lim_{\epsilon \downarrow 0} H(t+i\epsilon)$ exists in norm for Lebesgue-a.e.\ $t\in \R$,

(b) if $H(z)$ is invertible for at least one $z\in \C^+$, then $H(t+i0)$ is invertible for a.e.\ $t\in \R$,

(c) if $H(z)$ is simple for at least one $z\in \C^+$, then $H(t+i0)$ is simple for a.e.\ $t\in \R$.
\end{lemma}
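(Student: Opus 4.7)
The plan is to reduce each of (a), (b), (c) to Lemma~\ref{lem:bc}(c) applied to an appropriate scalar function of $z \in \C^+$ built from the entries of $H$. The key closure properties of bounded-characteristic functions under addition and multiplication (Lemma~\ref{lem:bc}(a)) let us manufacture such scalars.

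For (a), each entry $H_{ij}(z)$ is of bounded characteristic by hypothesis, hence by Lemma~\ref{lem:bc}(c) the boundary value $H_{ij}(t+i0)$ exists for every $t$ outside a null set $N_{ij}$. Since $k$ is finite, $N := \bigcup_{i,j} N_{ij}$ is still a null set, and on $\R \setminus N$ every entry of $H(z)$ has a boundary value. On the finite-dimensional space $M_k(\C)$ the operator norm is equivalent to the entrywise supremum norm, so entrywise convergence upgrades to norm convergence $H(t+i\e) \to H(t+i0)$ for a.e.\ $t$.

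For (b), observe that $\det H(z)$ is a fixed polynomial expression in the entries of $H(z)$, hence of bounded characteristic by the closure properties in Lemma~\ref{lem:bc}(a). The hypothesis that $H(z_0)$ is invertible for some $z_0 \in \C^+$ means $\det H(z_0) \neq 0$, so $\det H \not\equiv 0$. Lemma~\ref{lem:bc}(c) then forces the scalar boundary value $(\det H)(t+i0)$ to be non-zero off a null set. But by part~(a) and continuity of the determinant on $M_k(\C)$, $(\det H)(t+i0) = \det(H(t+i0))$ for a.e.\ $t$, so $H(t+i0)$ is invertible almost everywhere.

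For (c), the task is to encode the ``simple'' property as a polynomial non-vanishing condition in the entries of the matrix. Recall that a $k\times k$ matrix $M$ is simple in the sense used here (all generalized eigenspaces one-dimensional) if and only if its characteristic polynomial $p_M$ has $k$ distinct roots, equivalently the discriminant of $p_M$ is non-zero, equivalently the resultant of $p_M$ and $p_M'$ is non-zero, that is $\det S(M) \neq 0$, where $S(M)$ is the Sylvester matrix associated to $p_M$ and $p_M'$. Since the coefficients of $p_M$ are polynomials in the entries of $M$, $\det S(M)$ is itself a polynomial in the entries of $M$. Substituting $M = H(z)$ and invoking Lemma~\ref{lem:bc}(a), the function $z \mapsto \det S(H(z))$ is of bounded characteristic. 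By hypothesis $\det S(H(z_0)) \neq 0$, so this scalar function is not identically zero, and Lemma~\ref{lem:bc}(c) yields $\det S(H(t+i0)) \neq 0$ for a.e.\ $t$, which by part~(a) is the same as $\det S$ applied to the boundary matrix. Hence $H(t+i0)$ is simple for almost every $t \in \R$.

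The only non-routine input is the equivalence between simplicity of $M$ and $\det S(M) \neq 0$; this is the classical resultant/discriminant criterion for a polynomial to have distinct roots, and it is precisely the characterization that is invoked by reference in Section~\ref{sec:BSsimpleModB}. Once this is in hand, the proofs of (a), (b), (c) are parallel applications of Lemma~\ref{lem:bc}, with the respective scalar witnesses $H_{ij}$, $\det H$, and $\det S(H)$.
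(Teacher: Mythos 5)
Your proposal is correct and follows essentially the same route as the paper: part (a) by entrywise boundary values plus finite dimension, part (b) via the determinant as a bounded-characteristic scalar, and part (c) via the discriminant/Sylvester-matrix criterion for simplicity, each concluded with Lemma~\ref{lem:bc}(c). The paper phrases the simplicity criterion through $\prod_{i<j}(\lambda_j-\lambda_i)^2=(-1)^{\frac{1}{2}k(k-1)}\det S$, which is exactly your resultant argument.
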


\begin{proof}
(a) All matrix elements of $H(t+i\varepsilon)$ have boundary values for a.e.\ $t\in \R$ by Lemma~\ref{lem:bc}(c). This implies the existence of norm limits for the finite matrix $H$ and almost every $t$.

(b)  Let $d(z)=\det \, H(z)$.  Then $d(z)$ is a sum of products of matrix elements of $H(z)$ and thus of bounded characteristic. By assumption, $d(z)$ is not identically zero in $\C^+$. Therefore we conclude from Lemma~\ref{lem:bc} that $d(t+i0)= \det \,H(t+i0)$ exists and is non-zero for almost every $t \in \R$, proving the claim.

(c) We use the following general fact: Suppose that $C=(c_{ij})$ is a $k\times k$ matrix and $\lambda_1,
\ldots, \lambda_k$ are its eigenvalues counted with algebraic
multiplicity.  Let
\begin{equation}
  P_C(x) = \det\,(xI-C)=\sum_{n=0}^k a_n x^n
\end{equation}
be the corresponding characteristic polynomial with $a_k=1$ and set
\begin{equation}
  \mathcal{F}(C)=\prod_{i<j}(\lambda_j-\lambda_i)^2.
\end{equation}
Thus $C$ is simple if and only if $\mathcal{F}(C) \not= 0$. Moreover, $\mathcal{F}(C)=(-1)^{\frac{1}{2}k(k-1)} \det S(C)$, with the Sylvester matrix \footnote{http://en.wikipedia.org/wiki/Discriminant}
\begin{equation}\label{sylvestermatrix}
S(C)= \left(
\begin{array}{ccccccc}
a_k&a_{k-1}&\cdots&a_0&&&   \\
&   a_k   & a_{k-1}& \cdots & a_0 & &   \\
& & \ddots & \ddots & & \ddots &   \\
& & & a_k & a_{k-1} & \cdots & a_0   \\
ka_k & (k-1)a_{k-1} & \cdots & a_1 & & &   \\
&   ka_k   & (k-1)a_{k-1}& \cdots & a_1 & &   \\
& & \ddots & \ddots & & \ddots &   \\
& & & ka_k & (k-1)a_{k-1} & \cdots & a_1
\end{array} \right).
\end{equation}

Now we can argue similar to the proof of (b): If $C=H(z)$, then the coefficients $a_n(z)$ of the characteristic polynomial are polynomials in the matrix-elements of $H(z)$ and thus of bounded characteristic. Therefore $\mathcal{F}(H(z))$ is of bounded characteristic and, by assumption, not identically vanishing in $\C^+$. Its boundary value $\mathcal{F}(H(t+i0))$ is non-zero and thus $H(t+i0)$ simple for almost every $t\in \R$.
\end{proof}

We conclude by commenting on the existence of the boundary values (\ref{eq:BSboundval}) of the Birman-Schwinger operators for the three models considered in this paper. For Models A and B these operators are finite rank and thus the existence of boundary values is a special case of Lemma~\ref{lem:green}(a). For Model B the operators $G(z) = f^{1/2} (H_{\hat{\omega}}-z)^{-1} f^{1/2}$ are compact for $z\in \C^+$, which follows from standard relative compactness properties of Schr\"odinger operators, e.g.\ \cite{SimonSemi}. To see why boundary values exist we use the following well-known result, see e.g.\ \cite{Naboko}.

\begin{lemma} \label{lem:bval}
If $H(\cdot)$ is an analytic bounded operator-valued function in the upper half plane $\C^+ := \{z: \mbox{Im}\,z>0\}$ such that $H(z)$ is trace class and Im$\,H(z) \ge 0$ for all $z\in \C^+$, then $H(E+i0) := \lim_{\varepsilon \downarrow 0} H(E+i\varepsilon)$ exists in operator norm (in fact every Schatten class norm other than the trace class) for almost every $E\in \R$.
\end{lemma}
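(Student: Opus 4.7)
The plan is to combine scalar boundary-value theory with an operator-valued Herglotz representation. First I would note that for every $\phi,\psi$ in the underlying Hilbert space, the scalar function $z\mapsto \langle H(z)\phi,\psi\rangle$ is of bounded characteristic: the diagonal case $\langle H(z)\phi,\phi\rangle$ is a bounded Herglotz function (using $\mathrm{Im}\,H(z)\ge 0$ and boundedness of $H$), hence of bounded characteristic by Lemma~\ref{lem:bc}(b), and polarization together with Lemma~\ref{lem:bc}(a) gives the bilinear case. Lemma~\ref{lem:bc}(c) then produces weak-operator boundary values for almost every $E\in\R$ (via a separability argument to choose a joint null set over a countable dense subset of vectors).

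To upgrade from weak convergence to Schatten-class (and thus norm) convergence, I would invoke the Nevanlinna representation for operator-valued Herglotz functions with trace-class values: there exist a bounded self-adjoint operator $A$ and a trace-class-valued positive Borel measure $\Sigma$ on $\R$ satisfying $\int (1+t^2)^{-1}\,d\,\mathrm{Tr}\,\Sigma(t)<\infty$ such that
\[
H(z) = A + \int_\R \left(\frac{1}{t-z} - \frac{t}{1+t^2}\right) d\Sigma(t),
\]
no linear term in $z$ appearing since $H$ is assumed bounded on $\C^+$. The imaginary part is then the Poisson-type integral $\mathrm{Im}\,H(E+i\varepsilon) = \int \varepsilon\,[(t-E)^2+\varepsilon^2]^{-1}\,d\Sigma(t)$, and the trace-class-valued Fatou theorem applied componentwise in a spectral decomposition of $\Sigma$ yields almost-everywhere convergence to $\pi\Sigma'(E)$ in trace norm.

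The real part is then recovered from the imaginary part via the Hilbert transform in the parameter $E$, and here lies the main obstacle: non-commutative Hilbert transform bounds in the Schatten scale. The Hilbert transform is known to be bounded on the Schatten class $\mathcal{S}^p$ for every $1<p<\infty$, a fact obtained by interpolating between the trivial $p=2$ endpoint (where Hilbert--Schmidt reduces to classical Plancherel) and non-commutative Calder\'on--Zygmund estimates, but it fails at the endpoints $p=1$ and $p=\infty$. This is precisely the origin of the qualifier \emph{every Schatten class norm other than the trace class} in the statement. Combining Schatten-$p$ boundedness of the Hilbert transform with the trace-norm Fatou convergence of $\mathrm{Im}\,H(E+i\varepsilon)$ produces Schatten-$p$ norm boundary values of $H(E+i\varepsilon)$ for every $1<p<\infty$ and almost every $E$; norm convergence follows from $\|\cdot\|_\infty\le\|\cdot\|_p$. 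Since this is classical, in practice I would invoke \cite{Naboko} for the technical machinery rather than redo it, using only the structural ingredients above to confirm that the hypotheses of that reference apply in the present setting.
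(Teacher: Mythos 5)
The paper does not actually prove Lemma~\ref{lem:bval}; it quotes it as a known result with the reference \cite{Naboko}, so your closing decision to fall back on that reference is exactly what the paper does, and checking that the hypotheses (analyticity, $\mathrm{Im}\,H(z)\ge 0$, trace-class values) match is all that is required at that level. The first half of your sketch is also essentially sound: the Nevanlinna representation with a positive $\mathcal{S}^1$-valued measure $\Sigma$ obeying $\int (1+t^2)^{-1}\,d\,\mathrm{Tr}\,\Sigma(t)<\infty$, and a.e.\ trace-norm convergence of $\mathrm{Im}\,H(E+i\varepsilon)$, which follows from positivity (the trace of the Poisson integral of the singular part is a scalar Poisson integral tending to zero a.e.) plus Lebesgue points of the Bochner-integrable density. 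Two small slips: ``bounded'' here means bounded-operator-valued, not uniformly bounded in $z$ (in the application $G(z)$ blows up as $\mathrm{Im}\,z\downarrow 0$), so you cannot dismiss the linear term on those grounds, though a term $A+Bz$ is harmless for boundary values; and upgrading a.e.\ convergence on a countable dense set of vectors to weak operator convergence needs an a.e.\ nontangential bound on $\|H(E+i\varepsilon)\|$.

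The genuine gap is the step recovering the real part. ``The Hilbert transform is bounded on $\mathcal{S}^p$'' can only mean boundedness of the vector-valued Hilbert transform on $L^p(\R;\mathcal{S}^p)$ (the UMD property of $\mathcal{S}^p$ for $1<p<\infty$), but there is no $L^p$ function to apply it to: the boundary density $\Sigma'$ is merely locally Bochner-integrable as an $\mathcal{S}^1$-valued function, $\Sigma$ may carry a nontrivial singular part, and $\mathcal{S}^1$ is not UMD. Moreover, a norm bound for the Hilbert transform does not by itself give almost-everywhere convergence of the conjugate Poisson integrals; that requires weak-type/maximal estimates, and establishing such estimates in the $\mathcal{S}^p$ norm for Cauchy-type integrals of a positive $\mathcal{S}^1$-valued measure is precisely the nontrivial content of \cite{Naboko} (building on earlier work of Birman and Entina), whose proof exploits the positivity of $\Sigma$ in an essential way, e.g.\ by factoring its density through Hilbert--Schmidt operators so as to reduce to $\mathcal{S}^2$-valued integrals. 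As written, your outline assumes the hard part, and the observation that the Hilbert transform fails at $p=1$ is a reasonable heuristic for the exclusion of the trace norm but not an argument (the trace-norm counterexamples are themselves part of \cite{Naboko}). In short: as a citation your proposal coincides with the paper; as a self-contained proof the central step does not go through.
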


For Model C we have analyticity of $G(z)$ and Im$\,G(z)\ge 0$ in the upper half plane, but $G(z)$ is generally not trace class (other than for $d=1$). But we can argue as follows, inserting spectral projections:

For any finite interval $I=[a,b]$ let $I'=[a-1,b+1]$. For $E\in I$ consider
\begin{eqnarray*}
G(E+i\varepsilon) & = & f^{1/2} (H_{\hat{\omega}}-(E+i\varepsilon))^{-1} P_{I'}(H_{\hat{\omega}}) f^{1/2} \\ & & \mbox{} + f^{1/2} (H_{\hat{\omega}}-(E+i\varepsilon))^{-1} P_{\R \setminus I'}(H_{\hat{\omega}}) f^{1/2}.
\end{eqnarray*}
The second term trivially has a limit as $\varepsilon \downarrow 0$, while the first term falls into the class considered in Lemma~\ref{lem:bval}. One uses here that $f^{1/2} P_{I'}(H_{\hat{\omega}})$ is Hilbert-Schmidt, e.g.\ \cite{SimonSemi}. As a consequence, $G(E+i0)$ exists for almost every $E\in I$, and, by exhaustion, for almost every $E\in \R$.

\end{appendix}


\begin{thebibliography}{99}

\bibitem{BHS} J.\ Bellissard, P.\ D.\ Hislop and G.\ Stolz, Correlation estimates in the Anderson model, {\it J.\ Stat.\ Phys.} {\bf 129} (2007), 649--662

\bibitem{Carmona/Lacroix} R.\ Carmona and J.\ Lacroix: Spectral Theory of Random Schr\"odinger Operators, Birkh\"auser, Boston, 1990

\bibitem{CGK} J.-M.\ Combes, F.\ Germinet and A.\ Klein, Poisson statistics for eigenvalues of continuum random Schr\"odinger operators, arXiv 0807.0455

\bibitem{Combes/Hislop} J.-M.\ Combes and P.\ D.\ Hislop, Localization for some continuous, random Hamiltonians in $d$-dimension, {\it J.\ Funct.\ Anal.} {\bf 124} (2009), 149--180

\bibitem{dRJLS} R.\ del Rio, S.\ Jitomirskaya, Y.\ Last and B.\ Simon, Operators with singular continuous spectrum. IV. Hausdorff dimensions, rank one perturbations, and localization, {\it J.\ Anal.\ Math.} {\bf 69} (1996), 153--200

\bibitem{Garnett} J.\ B.\ Garnett: Bounded analytic functions, Pure and Applied Mathematics {\bf 96}, Academic Press, New York-London, 1981

\bibitem{GV} G.-M.\ Graf and A.\ Vaghi, A remark on an estimate by Minami, {\it Lett.\ Math.\ Phys.} {\bf 79} (2007), 17--22

\bibitem{JL} V.\ Jaksic and Y.\ Last, Simplicity of singular spectrum in Anderson-type Hamiltonians, {\it Duke Math.\ J.} {\bf 133} (2006), 185--204

\bibitem{Kirsch} W.\ Kirsch, An invitation to random Schr\"odinger operators, in: Random Schr\"odinger Operators, {\it Panoramas et Synth\`eses} {\bf 25}, 1--119, Soci\'et\'e Math\'ematique de France, Paris 2008

\bibitem{KM} A.\ Klein and S.\ Molchanov, Simplicity of eigenvalues in the Anderson model, {\it J.\ Stat.\ Phys.} {\bf 167} (2006), 95--99

\bibitem{Minami} N.\ Minami, Local Fluctuation of the Spectrum of a Multidimensional Anderson Tight Binding Model, {\it Commun.\ Math.\ Phys.} {\bf 177} (1996), 709--725

\bibitem{Naboko} S.\ N.\ Naboko, Nontangential boundary values of operator $R$-functions in a half-plane. (Russian)  {\it Algebra i Analiz} {\bf 1}  (1989),  197--222,  translation in  {\it Leningrad Math.\ J.} {\bf 1} (1990), 1255--1278

\bibitem{Privalov} I.\ I.\ Privalov: Boundary properties of analytic functions (Russian), 2nd ed., Gosudarstv.\ Izdat.\ Tehn.-Teor.\ Lit., Moscow-Leningrad, 1960.

\bibitem{SimonSemi} B.\ Simon, Schr\"odinger semigroups, {\it Bull.\ Amer.\ Math.\ Soc.} {\bf 7} (1982), 447--526

\bibitem{Simon} B.\ Simon, Cyclic vectors in the Anderson model, Special issue dedicated to Elliott H.\ Lieb, {\it Rev.\ Math.\ Phys.} {\bf 6} (1994), 1183--1185

\end{thebibliography}
\end{document}